\title{Non-Identifiable Pedigrees and a Bayesian Solution}
\author{B. Kirkpatrick}
\institute{University of British Columbia}
\begin{document}         

\maketitle
\begin{abstract}
  Some methods aim to correct or test for relationships or to reconstruct the pedigree, or family tree.  We show that these methods cannot resolve ties for correct relationships due to identifiability of the pedigree likelihood which is the probability of inheriting the data under the pedigree model.  This means that no likelihood-based  method can produce a correct pedigree inference with high probability.  This lack of reliability is critical both for health and forensics applications.

  ~ ~ Pedigree inference methods use a structured machine learning approach where the objective is to find the pedigree graph that maximizes the likelihood.  Known pedigrees are useful for both association and linkage analysis which aim to find the regions of the genome that are associated with the presence and absence of a particular disease.  This means that errors in pedigree prediction have dramatic effects on downstream analysis.

  ~ ~ In this paper we present the first discussion of multiple typed individuals in non-isomorphic pedigrees, $\mathcal{P}$ and $\mathcal{Q}$, where the likelihoods are non-identifiable, $Pr[G~|~\mathcal{P},\theta] = Pr[G~|~\mathcal{Q},\theta]$, for all input data $G$ and all recombination rate parameters $\theta$.  While there were previously known non-identifiable pairs, we give an example having data for multiple individuals.
  
  ~ ~ Additionally, deeper understanding of the general discrete structures driving these non-identifiability examples has been provided, as well as results to guide algorithms that wish to examine only identifiable pedigrees.  This paper introduces a general criteria for establishing whether a pair of pedigrees is non-identifiable and two easy-to-compute criteria guaranteeing identifiability. Finally, we suggest a method for dealing with non-identifiable likelihoods: use Bayes rule to obtain the posterior from the likelihood and prior.  We propose a prior guaranteeing that the posterior distinguishes all pairs of pedigrees.

 ~ ~ Shortened version published as: {\bf B. Kirkpatrick.} Non-identifiable pedigrees and a Bayesian solution. \emph{Int. Symp. on Bioinformatics Res. and Appl. (ISBRA)}, 7292:139-152 2012.

\end{abstract}

\let\thefootnote\relax\footnotetext{{\bf Keywords:} pedigree genetics, discrete probability, identifiability.}

\vspace{-0.75cm}
\section{Introduction}
\vspace{-0.25cm}
\paragraph{Motivation.}
Pedigrees are useful for disease association~\cite{Thornton2007},
linkage analysis~\cite{Abecasis2002}, and estimating recombination
rates~\cite{Coop2008}.  Most of these calculations involve the
pedigree likelihood which is formulated using probabilities for
Mendelian inheritance given a graph of the relationships.  Since the
known algorithms for computing the likelihood are exponential, there
have been many attempts to speed up the exact likelihood
calculation~\cite{Fishelson2005,Abecasis2002,Sobel1996,Geiger2009,Browning2002,McPeek2002inference,Kirkpatrick2011xx}.
Due to the running-time issue, other statistical methods have been
introduced which perform genome-wide association studies that use a faster 
correction for the relationship
structure~\cite{Bourgain2003,Thornton2007,Thornton2010}.

Pedigree reconstruction, introduced by
Thompson~\cite{thompson1985}, is very similar to methods
used for phylogenetic tree reconstruction.  The aim is to search the
space of pedigree graphs for the graph that maximizes the likelihood, 
which is the probability of the observed data being inherited on the
given pedigree graph.  However, the pedigree reconstruction problem
differs from the phylogenetic reconstruction problem in several
important ways: 1) the pedigree graph is a directed acyclic graph
whereas the phylogeny is a tree, 2) while the phylogenetic
likelihood is efficiently computed, the only known algorithms for the
pedigree likelihood are exponential, either in the number of people or
the number of sites~\cite{Lauritzen2003}, and 3) the phylogenetic
likelihood is identifiable~\cite{Thatte2010}, while we demonstrate
that the pedigree likelihood is non-identifiable for the pedigree
graph.

Whether the pedigree likelihood is identifiable for the pedigree graph
is crucial to forensics where relationship testing is performed using
the likelihood on unlinked sites~\cite{Pinto2010}.  The scenario is
that an unknown person, $a$, leaves their DNA at the crime scene, and
it is a close match to a sample, $b$, in a database.  The relationship
between $a$ and $b$ is predicted, and any relatives of $b$ who fit
the relationship type are under suspicion.  Our results indicate
that the number of people who should fall under suspicion might be
larger than previously thought.  For example, paternity and
full-sibling testing are both common and very accurate.  However,
half-sibling relationships are non-identifiable from avuncular
relationships and from grand-parental relationships with unlinked
sites.  As we will see later, for both unlinked and linked sites,
different types of cousins relationships are also non-identifiable,
even with the addition of genetic material from a third related
person.  Due to these non-identifiable relationships, a known
relationship between a third person, $c$ and $b$ is not enough
information for conviction without also checking whether there is a
perfect match between the DNA of $c$ and $a$ and whether there is
additional information.

The likelihood is
also used to correct existing pedigrees where relationships are
mis-specified~\cite{McPeek2000,Sun2002,Stankovich2005}.  Much of their
success comes from changing relationships that result in zero or
very low likelihoods.  Again, the accuracy of these methods will be
effected by the non-identifiable likelihood. 
For similar reasons, the accuracy of pedigree relationship prediction~\cite{Stankovich2005} and reconstruction methods~\cite{thompson1985,Kirkpatrick2011b} is greatly
influenced by the likelihood being non-identifiable, since these
methods rely on the likelihood or approximations of it to guide relationship prediction.

The kinship coefficient is known to be non-identifiable for the
pedigree graph~\cite{Thompson1975}.  The kinship coefficient is an
expectation over the condensed identity states which describe the
distinguishable allelic relationships between a pair of individuals.
Pinto et al.~\cite{Pinto2010} showed that there are cousin-type pairs
of pedigrees having the same kinship coefficient.  However, these
results apply only to \emph{unlinked} sites, a
special case of the \emph{linked} sites.

This work considers identifiable pedigrees on \emph{linked} sites.
Thompson~\cite{Thompson1975} provided an early discussion of
this topic.  Donnelly~\cite{Donnelly1983}
discovered that cousin-type relationships are non-identifiable if two
pedigrees have the same total number of edges separating the two
genotyped cousins from the common ancestor.  

In this paper, we make use of a method by Kirkpatrick and
Kirkpatrick~\cite{Kirkpatrick2011xx} to collapse the original hidden
states of the likelihood HMM into the combinatorially largest
partition which is still an HMM.  Using this tool-box, we are able to
show that two pedigrees are non-identifiable if and only if they have
an isomorphism between their collapsed state spaces.  We relate this
isomorphism to known results on the non-identifiability of the kinship
coefficient.  We introduce a method of removing edges from a pedigree
to obtain a minimal pedigree having the same likelihood.  We then show
that two pedigrees that have different minimal sizes must be
identifiable.  We connect this notion of removing edges to the
pruning introduced by McPeek~\cite{McPeek2002inference} which is
clearly implementable in polynomial time, and we also introduce a result
stating that pedigrees with discrete non-overlapping generations such
as those obtained from the diploid Wright-Fisher (dWF) model are always
identifiable.

We give several examples of the kinship coefficient and pedigree
likelihood being non-identifiable.  We give the only known
non-identifiability example where there are more than two individuals
with data.  Finally, we discuss a Bayesian 
method for integrating over this uncertainty.

\vspace{-0.25cm}
\section{Background}
\vspace{-0.25cm}
A \emph{pedigree graph} is a directed acyclic graph $P=(I(P),E(P))$
where the nodes are individuals and edges are parent-child
relationships directed from parent to child.  All individuals in
$I(P)$ must have either zero or two incoming edges.  If an individual
has zero incoming edges, then that individual is a \emph{founder}.
The set of founders for pedigree graph $P$ is $F(P)$.

A \emph{pedigree} is a tuple $\mathcal{P} = (P,s,\chi,\ell)$ where $P$ is
the pedigree graph,  function $s:I(P) \to \{m,f\}$ are the genders, set $\chi
\subseteq I(P)$ is the individuals of interest, and $\ell:\chi \to
\mathbb{N}$ are the \emph{names} of the individuals of interest.  If
$i \in I(P)$ has two incoming edges, $p_0(i)$ and $p_1(i)$, then one parent
must be labeled $s(p_{j}(i)) = m$ and the other $s(p_{1-j}(i)) = f$ for $j \in \{0,1\}$.

The likelihood, $Pr[G~|~\cal{P},\theta]$, is a function of the
genotypes $G$, the recombination rates $\theta$, and the pedigree
$\mathcal{P}$.  However, we will abuse notation by referring
to a pedigree by its pedigree graph and writing 
$Pr[G~|P,\theta]$.  In these instances, the set $\chi$ will be clear
from the context.  

Two pedigrees $\mathcal{P}$ and $\mathcal{Q}$ are said to be
\emph{identifiable} if and only if $Pr[G~|~\mathcal{P},\theta] \ne
Pr[G~|~\mathcal{Q},\theta]$ for some values of $G$ and $\theta$.  If
$\mathcal{P}$ and $\mathcal{Q}$ are not identifiable, we call them
\emph{non-identifiable}.

Two pedigree graphs, $P$ and $Q$ are \emph{isomorphic} if there exists
a mapping $\phi:I(P) \to I(Q)$ such that $(u,v) \in E(P)$ if and only
if $(\phi(u), \phi(v)) \in E(Q)$.  This is an isomorphism of the
pedigree graph rather than of the pedigree, because the genders are
not necessarily preserved by the map $\phi$.  From now on, we will
assume that $P$ and $Q$ are not isomorphic.

Two isomorphic pedigrees might have different gender labels,
and they would be identifiable when considering sex-chromosome data.  We
restrict our discussion to autosomal data, where these two pedigrees
would be non-identifiable.

\paragraph{The Hidden Markov Model.}
Rather than writing out the cumbersome likelihood equation, we will
define the likelihood by specifying the HMM.  For each pedigree 
$\mathcal{P}=(P,s,\chi,\ell)$, there is an HMM, and everything in this
section is defined relative to a specific pedigree $\mathcal{P}$.  To
specify the HMM, we need to specify the hidden states, the emission
probability, and the transition probabilities.  We will begin with the
hidden states.

An inheritance vector $x \in \{0,1\}^n$ has length $n = |E(P)|$.  Each
bit, $x_e$, in this vector indicates which grand-parental allele,
maternal or paternal, was inherited along edge $e \in E(P)$.  
An \emph{inheritance graph} $R_x$ 
contains two nodes for each individual in $i \in I(P)$, called $i_0$
and $i_1$, and edges $(p_{j}(i)_{x_e}, i_j)$ for each $(p_{j}(i),i)
\in E(P)$.  The sets $\chi_0$ and $\chi_1$ are the paternal and
maternal alleles, respectively, of the individuals of interest.  We
will refer to the collective set $\chi_0 \cup \chi_1$ as the
\emph{alleles of interest}.  Each node in $R_x$ represents an allele.
The inheritance graph is a forest with each root being a
founder allele.  The inheritance vectors are the \emph{hidden states} of
the HMM.  Let $\mathcal{H}_P$ be the hypercube of dimension $|E(P)|$;
its vertices represent all the inheritance vectors.

This inheritance graph represents identity-by-descent (IBD) in that
any pair of individuals of interest $i,i' \in \chi$ are IBD if there
exists an inheritance vector $x$ such that one pair of $(i_0,i'_0)$,
$(i_0,i'_1)$, $(i_1,i'_0)$ or $(i_1,i'_1)$ are connected.  The
\emph{identity states}, are the sets of the partition induced on the alleles of
interest by the connected components of $R_x$, namely 
$D_x = \{y \in \mathcal{H}_{P} | CC(R_y) = CC(R_x)\}.$
The \emph{transition probabilities} are a function of the per-site recombination rates $\theta = (\theta_1,..,\theta_{T-1})$ for $T$ sites.
Let $X_t$ be the random variable for the hidden state at site $t$.
The probability of recombining from hidden state $x$ to  state $y$ at site $t$ is
\vspace{-0.25cm}
\begin{eqnarray}
\label{Xtransition}
Pr[X_{t+1} = y~|~X_{t} = x, \theta] = \theta_t^{H(x,y)}(1-\theta_t)^{n-H(x,y)}
\vspace{-0.25cm}
\end{eqnarray}
where $H(x,y) = |x \oplus y |_1$ is the Hamming distance between the two bit vectors, $\oplus$ indicates the XOR operation, and $|.|_1$ is the $L_1$-norm.  In some instances, we may make the $\theta$ implicit, because it is clear from context.

The \emph{emission probability} depends on the data, which is the
genotype random variable $G$.  Each individual of interest $i \in
\chi$ has two rows in the genotype matrix which encode, for each
column $t$, the alleles that appear in that individual's genome.  For
example, $\{g_{it}^0,g_{it}^1\}$ from the $0$th and $1$st rows for
individual $i$ at site $t$ is the (unordered) set of alleles that
appear in that individual's genome.  The data for all the individuals
at site $t$ is an $n$-tuple $g_t = (\{g_{it}^0,g_{it}^1\} | \forall i)$ and $g =
(g_1,...,g_T)$ is the data at all $T$ sites.  The pedigree HMM
deconvolves these unordered alleles by considering all possible
orderings of the genotypes when assigning them to the hidden alleles.

Specifically, let $CC(R_x)$ be the connected components of $R_x$.
Then the emission probability at site $t$ is
\vspace{-0.25cm}
\[
  Pr[G_t = g_t~|~X_t = x, P] \propto \sum_{\tilde{g}_t} \prod_{c \in CC(R_x)} \mathbf{1}\{n(c,\tilde{g}_t) = 1\} Pr[h(c,\tilde{g}_t)]
\vspace{-0.25cm}
\]
where $\tilde{g}_t$ is the ordered alleles $(g_{it}^0,g_{it}^1)$ that
appear in $g_t$, $n(c,\tilde{g}_t)$ is the number of alleles assigned to
$c$ by $\tilde{g}_t$, and $h(c,\tilde{g}_t)$ is the allele of $\tilde{g}_t$
that appears in $c$.  Notice that by definition of the identity
states, $\{D_x | \forall x\}$, $Pr[G_t~|~X_t=x_1] = Pr[G_t~|~X_t=x_2]$ for
all $x_1,x_2 \in D_x$.

This completes the definition of the HMM and the likelihood.  Now, our
task is to find pairs of pedigree graphs $(P,Q)$ such that
$Pr[G~|~P,\theta] = Pr[G~|~Q,\theta]$ for all $G$ and $\theta$.  We
can do this by considering multiple equivalent HMMs and finding the
``optimal'' HMM that describes the likelihood of interest.  Given two
optimal HMMs, we can easily compare their likelihoods for different
values of $G$ and $\theta$.

\paragraph{The Maximum Ensemble Partition.}
In this paper, we will use a method similar to that discussed by
Browning and Browning~\cite{Browning2002} and improved by Kirkpatrick
and Kirkpatrick~\cite{Kirkpatrick2011xx}.  This method relies on an
algebraic formulation of the hidden states of the Hidden Markov Model
(HMM) that is used to compute the pedigree likelihood.  Specifically,
we can collapse the original hidden states into the combinatorially
largest partition which is still an HMM.  From the collapsed state
space (termed the maximum ensemble partition), we can easily see that
certain pairs of pedigrees have isomorphic HMMs and thus identical
likelihoods.

For pedigree $\mathcal{P} = (P,s,\chi,\ell)$, consider a new HMM with
hidden states $Y_t$ in a state space that is defined by a partition,
$m(P) := \{W_1,...,W_k\}$, of $\mathcal{H}_P$, meaning that for all
$i,j$, $W_i \cap W_j = \emptyset$ and $\cup_{i=1}^k W_i =
\mathcal{H}_P$.  For the HMM for $Y_t$ to have the same
likelihood as the HMM for $X_t$ the \emph{Markov property} and the
\emph{emission property}, defined next, must be satisfied.

Let the transition probabilities of $Y_t$ be the expectation of $X_t$ as follows, for all $i,j$, and for  $x \in W_i$
\vspace{-0.25cm}
\begin{eqnarray}
\label{Ytranstion}
Pr[Y_{t+1}=W_j~|~Y_t = W_i] &=& Pr[X_{t+1} \in W_j~|~X_t= x]  \\
&=& \sum_{y \in W_j} Pr[X_{t+1}=y~|~X_t=x].
\vspace{-0.25cm}
\end{eqnarray}
Conditioning on $\theta$ is implicit on both sides of the equation.
The \emph{Markov property} is required for $Y_t$ to be Markovian:
\vspace{-0.25cm}
\[
\sum_{y \in W_j} Pr[X_{t+1}=y~|~X_t=x_1] = \sum_{y \in W_j}  Pr[X_{t+1}=y~|~X_t=x_2]
\vspace{-0.25cm}
\]
for all $x_1, x_2 \in W_i$ for all $i$ and for all $W_j$. For more
details, see~\cite{Browning2002,Kirkpatrick2011xx}.

The \emph{emission property} states that the emission probabilities of
$X_t$ impose a constraint on $Y_t$.  This constraint is that the
partition $\{W_1,...,W_k\}$ must be a sub-partition of the partition
induced on the hidden states by the emission probabilities: \\
$E_x(P) = \left\{y \in \mathcal{H}_P ~|~ Pr[G_t=g_t~|~X_t=x] = Pr[G_t=g_t~|~X_t=y] ~\forall g_t \right\}.$\\
We call the set $\{E_x(P) | \forall x\}$ the \emph{emission partition} since it partitions the state-space $\mathcal{H}_P$.

It has been shown in~\cite{Kirkpatrick2011xx} that the partition
$\{W_1,...,W_k\}$ which satisfies the Markov property and the emission
property and which maximizes the sizes of the sets in the
partition---i.e.~$\max_{i \in \{1,...,k\}} |W_i|$---can be found in
time $O(nk2^n)$ where $n$ is the number of edges, and $k$ is a
function of the known symmetries of the pedigree graph $k \le 2^n$.
We call this partition the \emph{maximum ensemble partition}.


It turns out that the maximum ensemble partition is unique, making the
derived HMM the unique ``optimal'' representation for the likelihood.
We will exploit this fact to find non-identifiable pairs of pedigrees.

\vspace{-0.25cm}
\section{Methods}
\vspace{-0.25cm} We will define a general criteria under which a pair
of non-isomorphic pedigree graphs have identical likelihoods for all
input data and recombination rates, as wells as define a
uni-directional polynomial-checkable criteria whereby we can determine
whether some pairs of pedigrees are identifiable.  In the following
section, we will apply these results to investigate when pedigrees are
identifiable, to give several examples where the pedigrees are
non-identifiable, and to suggest a Bayesian solution.

Given two non-isomorphic pedigree graphs $P$ and $Q$, and their maximum ensemble partitions $m(P)$ and $m(Q)$, respectively.  We say that $\psi$ is a \emph{proper isomorphism} if $\psi$ is a bijection $m(P)$ onto $m(Q)$  such that the following hold:
\begin{description}
\vspace{-0.25cm}
\item[Transition Equality]  $Pr[Y_{t+1}^P~|~Y_t^P,\theta] = Pr[\psi(Y_{t+1}^P)~|~\psi(Y_{t}^P),\theta] ~~\forall t$
\item[Emission Equality]    $Pr[G_t~|~Y_t^P,P] = Pr[G_t~|~\psi(Y_t^P), Q] ~~\forall t$
\vspace{-0.25cm}
\end{description}
where $Y_{t}^P$ is the random variable for the hidden state for pedigree $P$.

\begin{theorem}
\label{thm:equivalence}
There exists isomorphism $\psi:m(P) \to m(Q)$ satisfying the transition and emission equalities if and only if the likelihoods for $\mathcal{P}$ and $\mathcal{Q}$ are non-identifiable, $Pr[G~|~\theta, P] = Pr[G~|~\theta, Q]$, for all $G$ and $\theta = (\theta_1,...,\theta_{T-1})$ where $T$ is the number of sites and $T \ge 2$.
\end{theorem}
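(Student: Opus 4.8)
The plan is to prove both implications, treating the converse ($\Rightarrow$) as the substantial half. Throughout I work with the maximum ensemble partition HMMs $\mathcal{M}_P = (\pi_P, A_P(\theta), E_P)$ and $\mathcal{M}_Q$, which by the construction recalled earlier realize exactly $Pr[G\,|\,P,\theta]$ and $Pr[G\,|\,Q,\theta]$. I write the likelihood at $T$ sites in forward-algorithm form as a sum over state paths,
\[
Pr[g\,|\,P,\theta]=\sum_{W_{i_1},\dots,W_{i_T}} \pi_P(W_{i_1})\,Pr[g_1\,|\,W_{i_1},P]\prod_{t=1}^{T-1}Pr[W_{i_{t+1}}\,|\,W_{i_t},\theta_t]\,Pr[g_{t+1}\,|\,W_{i_{t+1}},P],
\]
and symmetrically for $Q$.

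For the direction ($\Leftarrow$), suppose a proper isomorphism $\psi$ exists. I first observe that the initial law is matched automatically: it is the uniform measure on $\mathcal{H}_P$ pushed to the partition, $\pi_P(W_i)=|W_i|/2^{|E(P)|}$, which is the unique stationary distribution of $A_P(\theta)$ for every $\theta\in(0,1)$, since the hypercube kernel is doubly stochastic, irreducible and aperiodic. Because transition equality makes $\psi$ carry $A_P(\theta)$ to $A_Q(\theta)$, it carries the stationary law of one chain to that of the other, forcing $\pi_P(W_i)=\pi_Q(\psi(W_i))$. Applying $\psi$ coordinatewise to each path then turns every summand of $Pr[g\,|\,P,\theta]$ into the corresponding summand of $Pr[g\,|\,Q,\theta]$: emission equality matches the emission factors, transition equality the transition factors, and the above matches the initial factor. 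The two path-sums agree term by term for every $g$, every $\theta$, and every $T$, giving non-identifiability.

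For the direction ($\Rightarrow$), I exploit the freedom in $\theta$ and specialize to $T=2$, which is exactly why $T\ge2$ is required: a single site exposes only the emission marginal and can never probe the chain. Writing the two-site likelihood as the bilinear form $L(g_1,g_2,\theta)=\mathbf{1}^{\top}D_{g_2}A(\theta)D_{g_1}\pi$ with $D_{g}$ the diagonal emission matrix, the hypothesis says the $P$- and $Q$-forms of $L$ coincide as functions of $(g_1,g_2,\theta)$. I extract structure by expanding in $\theta$ about $0$. At $\theta=0$ the kernel is the identity, so the constant term is $\sum_i \pi_i\,Pr[g_1\,|\,W_i]\,Pr[g_2\,|\,W_i]$, a Gram-type form in the emission profiles; its $P$--$Q$ equality for all $g_1,g_2$ constrains the multiset of emission vectors together with their initial weights. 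The first $\theta$-derivative at $0$ reproduces, at the block level, the single-recombination (Hamming-distance-one) adjacency of the hypercube, i.e.\ the recombination graph on $m(P)$, and its equality constrains the transition geometry. The goal is to assemble these constraints into a single bijection $\psi:m(P)\to m(Q)$ and to verify that it satisfies both equalities for \emph{all} $\theta$, not merely to first order.

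The main obstacle is precisely this assembly: equality of the observable form $L$ does not by itself name a correspondence between hidden states, and a priori several states could carry proportional emission profiles and be permuted inconsistently with the transition structure. The resolution I expect to use is the \emph{maximality} of the ensemble partition: because $m(P)$ is the combinatorially largest lumping that is still an HMM, its states are pairwise observably distinguishable, so $\mathcal{M}_P$ is a minimal realization of $L$. Equal likelihoods then furnish two minimal realizations of the same two-site distribution; by the essential uniqueness of minimal HMM realizations---made rigorous here through the generically full-rank factorization of the two-site emission/transition tensor, with the $\theta$-derivative data pinning down the transition blocks---the two realizations must have equal state counts and be related by a unique relabeling, which is the desired $\psi$. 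Transition and emission equality then hold by construction. Verifying the distinguishability claim and the genericity needed for uniqueness of the factorization is where the real work lies.
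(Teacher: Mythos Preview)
Your easy direction (existence of $\psi$ implies equal likelihoods) is more careful than the paper's one-line appeal to ``definition of the HMM,'' and your observation that $\psi$ automatically carries the stationary/initial law along is a nice addition. Note, however, that your arrow labels are swapped relative to the theorem as stated: in the paper, $(\Rightarrow)$ is the easy direction and $(\Leftarrow)$ is the substantial one.

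For the hard direction, the paper's argument is far more direct than yours and avoids the analytical machinery entirely. It simply invokes the \emph{uniqueness} of the maximum ensemble partition, established in the cited work: since $Pr[G\,|\,P,\theta]=Pr[G\,|\,Q,\theta]$ for all $G,\theta$, the HMM $\mathcal{M}(Q)$ on state space $m(Q)$ is also an HMM realizing $Pr[G\,|\,P,\theta]$; but $m(P)$ is the unique maximum ensemble state space for that distribution, so $m(P)$ and $m(Q)$ must match up via some $\psi$ satisfying both equalities. No $T=2$ specialization, no $\theta$-expansion, no tensor factorization is needed---the uniqueness result does all the work at once.

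Your detour is not wrong in spirit, but it has a real risk. You appeal to ``essential uniqueness of minimal HMM realizations'' as though it were a general fact; for arbitrary HMMs this is delicate---two non-isomorphic HMMs can produce identical output processes, related only by a non-permutation linear map on the state simplex. What makes the argument go through here is not generic HMM minimality but the specific combinatorial uniqueness of the maximum ensemble partition for \emph{this} class of models. Your final paragraph does eventually reach for maximality of $m(P)$, which is exactly the right lever, but once you pull it the preceding Taylor-expansion and Gram-form analysis become superfluous. I would drop steps involving the $\theta$-derivative and the two-site bilinear form and go straight to: equal likelihoods $\Rightarrow$ $\mathcal{M}(Q)$ realizes the $P$-likelihood $\Rightarrow$ by uniqueness of the maximum ensemble partition, $m(P)$ and $m(Q)$ are related by a state-space bijection preserving transitions and emissions.
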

\begin{proof}
($\Rightarrow$) Given a proper isomorphism $\psi:m(P) \to m(Q)$ that satisfies the transition and emission equalities, the likelihoods are necessarily the same, by definition of the Hidden Markov Model.

($\Leftarrow$) Given that the two pedigrees are identifiable, we will construct $\psi$.  Consider pedigrees $P$ and $Q$.  They both have unique maximum ensemble partitions $m(Q)$ and $m(P)$~\cite{Kirkpatrick2011xx}.  By the definition of $Pr[G~|~\theta, Q]$, this distribution can be represented by an HMM, called $\mathcal{M}(Q)$, over state-space $m(Q)$.  By the equality $Pr[G~|~\theta, P] = Pr[G~|~\theta, Q]$, we know that there is an HMM for $P$, $\mathcal{M}(P)$, with the same transition matrix and emission probabilities as $\mathcal{M}(Q)$.  Since $\mathcal{M}(Q)$ has maximum ensemble state-space $m(Q)$, then by uniqueness, there is no other state-space that is as small.  By the equality of the two distributions, we know that $\mathcal{M}(P)$ also has maximum ensemble state-space $m(P)$.  But since $m(P)$ is the unique maximum ensemble state-space for $\mathcal{M}(P)$, there must be an isomorphism $\psi:m(P) \to m(Q)$ satisfying the transition and emission equalities. \qed
\end{proof}

To apply this method, we need to obtain $m(P)$ and $m(Q)$ and the appropriate proper isomorphism $\psi$.  To obtain $m(P)$ and $m(Q)$ we rely on the maximum ensemble algorithm~\cite{Kirkpatrick2011xx}.  The proper isomorphism is obtained by examining the transition probabilities of the respective HMMs.

\begin{corollary}
\label{cor:unlinked}
For unlinked sites $\theta_t = 0.5$ for all $1 \le t \le T-1$, for any pedigree graphs $P$ and $Q$ with maximum ensemble states $|m(P)| = |m(Q)|$ and identical identity states, the pedigrees are non-identifiable. (proven in Appendix)
\end{corollary}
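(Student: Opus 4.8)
The plan is to exploit the drastic simplification of the transition law at $\theta_t = 0.5$. Substituting $\theta_t = 0.5$ into (\ref{Xtransition}) gives $Pr[X_{t+1}=y \mid X_t=x,\theta] = (1/2)^{H(x,y)}(1/2)^{n-H(x,y)} = 2^{-n}$, independent of the source state $x$. Hence under unlinked sites the hidden states $X_1,\dots,X_T$ are i.i.d.\ uniform on $\mathcal{H}_P$, and the likelihood factorizes as $Pr[G\mid P,\theta] = \prod_{t=1}^T Pr[G_t\mid P]$ with $Pr[G_t\mid P] = 2^{-n}\sum_{x\in\mathcal{H}_P} Pr[G_t = g_t\mid X_t=x,P]$. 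The whole problem therefore reduces to proving that the single-site marginals $Pr[G_t\mid P]$ and $Pr[G_t\mid Q]$ agree for every $g_t$; the product over the $T$ sites then yields non-identifiability at the unlinked parameter.

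First I would rewrite the single-site marginal as a sum over identity states. Since the emission probability depends on $x$ only through $CC(R_x)$, it is constant on each identity state $D$, so $Pr[G_t\mid P] = \sum_{D} (|D|/2^n)\,\epsilon_P(D,g_t)$, where $\epsilon_P(D,g_t)$ is the common emission value on $D$ and $|D|/2^n$ is exactly the probability that the uniform hidden state lands in $D$. In other words, the unlinked single-site law is completely determined by the collection of identity states, each weighted by its normalized size --- precisely the classical identity coefficients --- together with the emission value it induces on the alleles of interest.

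Next I would use the two hypotheses to match these two weighted sums term by term. At $\theta_t = 0.5$ the Markov property in (\ref{Ytranstion}) holds vacuously, because every transition probability equals $2^{-n}$ regardless of the block; consequently the maximum ensemble partition collapses to the emission partition, which is determined by the identity states (emission being constant on each). Thus $|m(P)|$ counts the emission-distinguishable identity states of $P$, and the hypothesis $|m(P)| = |m(Q)|$ says $P$ and $Q$ have the same number of them, while ``identical identity states'' supplies a bijection $\psi$ between the identity states of $P$ and $Q$ that preserves both the induced allele partition (hence the emission value $\epsilon$) and the normalized weight $|D|/2^n$. Under $\psi$ the two sums above are identical for every $g_t$, giving $Pr[G_t\mid P] = Pr[G_t\mid Q]$, and hence the factorized likelihoods agree.

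The main obstacle is this third step: verifying that the probability weights, and not merely the structural list of identity states, are matched by the hypotheses. Because $P$ and $Q$ may have different edge counts $n_P \ne n_Q$, the raw sizes $|D|$ are not directly comparable, and one must argue at the level of the normalized weights $|D|/2^n$ that $\psi$ is weight-preserving; this is exactly the content that must be extracted from ``identical identity states,'' with $|m(P)| = |m(Q)|$ guaranteeing that the bijection leaves no identity state unmatched on either side. A secondary bookkeeping point is the emission normalization hidden in the $\propto$ of the emission definition: one checks that the per-site normalizing constant is the same function of $g_t$ for the paired states, so that equality of the weighted sums upgrades to equality of the genuine single-site probabilities.
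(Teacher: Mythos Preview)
Your approach differs from the paper's. The paper does not factorize the likelihood; instead it invokes the forward direction of Theorem~\ref{thm:equivalence}, observing that at $\theta_t=0.5$ any bijection $\psi:m(P)\to m(Q)$ preserving identity states satisfies the transition equality (transitions being uniform), and that identical identity states force the emission equality (identity states refine the emission partition), so a proper isomorphism exists and the likelihoods agree. Your direct factorization into single-site marginals is more elementary and has the virtue of making the dependence on the identity coefficients $|D|/2^{n}$ explicit rather than hiding it inside the isomorphism criterion.

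One inaccuracy: your claim that $m(P)$ ``collapses to the emission partition'' at $\theta_t=0.5$ misreads the definition. The maximum ensemble partition is the coarsest partition satisfying the Markov property for \emph{all} $\theta$, so it is $\theta$-independent and in general strictly finer than the emission partition. This does not damage your computation, which really only needs the identity-state decomposition; but it means your explanation of the role of the hypothesis $|m(P)|=|m(Q)|$ is not right. In the paper's route that hypothesis is simply what allows $\psi$ to be a bijection between the two maximum-ensemble state spaces. You are, however, correct that the weight-matching step is the crux: both your argument and the paper's require that the normalized identity weights $|D|/2^{n}$ be carried across by the bijection, and both are tacitly reading ``identical identity states'' as equality of these identity coefficients, not merely equality of the list of allele partitions that occur. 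The paper's proof does not make this point any more explicit than you do.
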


We are now in a position to relate non-identifiability on pedigree
HMMs to non-identifiability of an important calculation that relies
on independent sites---the kinship coefficient.  The \emph{kinship
  coefficient} for a pair of individuals of interest is defined as the
probability of IBD when randomly choosing one allele from each
individual of interest.  Let the two individuals of interest be $\chi
= \{a,b\}$.  We write the kinship coefficient for $\chi$ as
$\Phi_{I}(P)_{\chi} = \sum_{x} \frac{\eta(x,\chi)}{4} \frac{1}{2^n}$
where $\eta(x,\chi)$ is the number of pairs of alleles of interest
$\chi_0 \cup \chi_1$ sharing the same connected component in $R_x$ and
$\chi_0 \cup \chi_1 = \{\{a_0,b_0\}, \{a_0,b_1\}, \{a_1,b_0\},
\{a_1,b_1\}\}$.  


\begin{corollary}
\label{cor:kinship}
For unlinked sites $\theta_t = 0.5$ for all $1 \le t \le T-1$, given two 
non-identifiable pedigree graphs, $P$ and $Q$, with two
individuals of interest $\chi=\{a,b\}$, the kinship coefficient is
identical. (proven in Appendix)
\end{corollary}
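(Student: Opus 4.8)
The plan is to derive the equality of kinship coefficients directly from the proper isomorphism $\psi:m(P)\to m(Q)$ guaranteed by Theorem~\ref{thm:equivalence}, specializing everything to the unlinked regime $\theta_t = 0.5$. First I would record what the Transition and Emission Equalities say at $\theta_t=0.5$. For unlinked sites the per-step probability out of \emph{any} state into a block $W_j$ is $\sum_{y\in W_j}(0.5)^{n}=|W_j|/2^{n}$, independent of the source state, so the stationary distribution of the collapsed chain is exactly $\pi_P(W_j)=|W_j|/2^{n_P}$. The Transition Equality then forces $|W_j|/2^{n_P}=|\psi(W_j)|/2^{n_Q}$ for every block, i.e.\ $\psi$ preserves the stationary weight of each maximum-ensemble state. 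This is the step that genuinely uses unlinkedness, and stating it in terms of weights rather than raw counts lets it accommodate the possibility $n_P\neq n_Q$.

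The second ingredient is to rewrite the kinship coefficient as a stationary-weighted sum over maximum-ensemble blocks. Since the emission at a site depends only on the partition that the connected components of $R_x$ induce on the alleles of interest $\chi_0\cup\chi_1$, the emission partition coincides with the identity-state partition, and the count $\eta(x,\chi)$ of IBD pairs among $\{a_0,b_0\},\{a_0,b_1\},\{a_1,b_0\},\{a_1,b_1\}$ is constant on each identity state. Because $m(P)$ refines the emission partition, $\eta(\cdot,\chi)$ is therefore constant on each block $W_i$; call its common value $\eta(W_i,\chi)$. Regrouping the defining sum by blocks gives $\Phi_{I}(P)_{\chi}=\sum_{W_i\in m(P)}\frac{|W_i|}{2^{n_P}}\frac{\eta(W_i,\chi)}{4}=\sum_{W_i}\pi_P(W_i)\frac{\eta(W_i,\chi)}{4}$, and identically for $Q$.

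It then remains to check that $\psi$ matches the $\eta$-values block by block. Here I would invoke the Emission Equality $Pr[G_t\,|\,Y_t^P,P]=Pr[G_t\,|\,\psi(Y_t^P),Q]$: two blocks with identical emission distributions induce the same partition on the \emph{named} alleles of interest (the labels $\ell$ identify $a$ and $b$ across $P$ and $Q$), hence the same set of IBD pairs and the same count, so $\eta(W_i,\chi)=\eta(\psi(W_i),\chi)$. Substituting the stationary-weight identity together with this $\eta$-identity and then reindexing the sum through the bijection $\psi$ yields $\Phi_{I}(P)_{\chi}=\sum_{W_i}\pi_Q(\psi(W_i))\frac{\eta(\psi(W_i),\chi)}{4}=\Phi_{I}(Q)_{\chi}$, which is the claim.

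The step I expect to be the main obstacle is the precise justification that the emission distribution of a block determines its IBD-pair count $\eta$ — that is, that distinct partitions of the alleles of interest yield distinguishable emission distributions, so that Emission Equality really transfers the IBD structure and $m(P)$ refining the emission partition suffices to make $\eta$ constant on blocks. This is where one needs the emission partition to equal, not merely be coarser than, the identity-state partition, and where generic founder-allele frequencies enter. Care is also needed in aligning the labels $\ell$ so that the four pairs in the definition of $\eta$ correspond correctly under $\psi$.
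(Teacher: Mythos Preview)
Your argument is correct and follows the same overall arc as the paper's: start from the proper isomorphism $\psi:m(P)\to m(Q)$ given by Theorem~\ref{thm:equivalence}, use Emission Equality to transport $\eta(\cdot,\chi)$ across $\psi$, and rewrite the kinship coefficient as a weighted sum so that the bijection yields $\Phi_I(P)_\chi=\Phi_I(Q)_\chi$.

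The differences are of granularity and explicitness rather than of strategy. The paper coarsens $\psi$ to an induced bijection $\gamma$ between the \emph{emission partitions} $\{E_x(P)\}\to\{E_x(Q)\}$ and sums at that level, using the expression $\Phi_I(P)_\chi=\sum_{E_x(P)}\frac{\eta(x,\chi)}{4}\frac{|E_x|}{2^n}$; you instead stay at the finer level of maximum-ensemble blocks and sum $\sum_{W_i}\pi_P(W_i)\frac{\eta(W_i,\chi)}{4}$. Your route has the advantage that the weight-matching step is made fully explicit: specializing Transition Equality to $\theta_t=0.5$ gives $|W_j|/2^{n_P}=|\psi(W_j)|/2^{n_Q}$ directly, whereas the paper's proof leaves the corresponding preservation of $|E_x|/2^n$ under $\gamma$ implicit. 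Conversely, the paper's coarsening to the emission partition slightly streamlines the $\eta$-preservation step, since $\eta$ is \emph{defined} via the connected-component partition that determines the emission, so constancy on emission blocks is immediate; your version has to observe separately that $m(P)$ refines the emission partition. The concern you flag about whether distinct identity states yield distinguishable emission distributions is real but is handled the same way in both proofs: one only needs that equal emission distributions imply equal $\eta$, and the paper takes this from the fact that the emission probability is a function of the connected components of $R_x$ restricted to the alleles of interest.
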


This last corollary is a uni-directional implication.  There are some pairs of pedigrees $P$ and $Q$ for which the kinship coefficient is equal but for which the likelihood is identifiable, see Fig~\ref{fig:example1}.

The final set of results we introduce will try to answer the question of when are pedigrees identifiable.  Since some algorithms use the likelihood to choose the best pedigree graph or relationship type, these results give some guarantees for when those algorithms will make correct decisions.
We wish to show that under some definition of ``necessary'' edges for some individuals of interest, pedigrees $P$ and $Q$ with different numbers of necessary edges have no proper isomorphism and are, therefore, identifiable. We will relate our definition of a necessary edge to the literature.
And, we will establish an even more restricted class of pedigrees for which no pair of pedigrees is identifiable.  This is the class of all dWF pedigrees.

For an edge, $e$, in the pedigree, let $\sigma$ be the indicator vector with bits $\sigma_f = 0$ for all $f \ne e$ and $\sigma_e = 1$.
For pedigree $P$ having states $\{W_1,...,W_k\}$, we will define an edge $e \in E(P)$ to be \emph{superfluous} if and only if the following two properties hold
\begin{description}
\vspace{-0.25cm}
\item[1)] $Pr[X_{t+1} = y | X_t = x] = Pr[X_{t+1} = \sigma \oplus y | X_t = \sigma \oplus x]$, for every $y \in W_j$ and $x \in W_i$ and for ever $i$ and $j$, and
\item[2)]  $Pr[G_{t}| X_t = x] = Pr[G_{t} | X_t = \sigma \oplus x]$ for all $x \in \mathcal{H}_P$.
\vspace{-0.25cm}
\end{description}
Conversely, an edge $e$ is \emph{necessary} if it is not superfluous.  For an example, see the edge adjacent to the grand-father in $P'$ of Fig~\ref{fig:example3}.

\begin{lemma}
We say that an edge is \emph{removed} if its bit is set to a fixed value in all the inheritance vectors.
Any superfluous edge can be removed without changing the value of the likelihood. (proven in Appendix)
\end{lemma}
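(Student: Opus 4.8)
The plan is to compute the likelihood directly from the raw inheritance-vector HMM for $P$ (which, by the Markov and emission properties, carries the same likelihood as any of its valid lumpings) and to exhibit an exact factorization of the path sum that isolates the contribution of the superfluous edge. First I would write
\[
Pr[G\mid P,\theta] = \sum_{x^{(1)},\dots,x^{(T)}} \pi(x^{(1)}) \prod_{t=1}^{T-1} Pr[X_{t+1}=x^{(t+1)}\mid X_t=x^{(t)}] \prod_{t=1}^{T} Pr[G_t\mid X_t=x^{(t)}],
\]
with $\pi$ the uniform initial law on $\mathcal{H}_P$. I would then split each state as $x^{(t)} = (\bar{x}^{(t)}, b_t)$, where $b_t = x^{(t)}_e \in \{0,1\}$ is the bit on the superfluous edge $e$ and $\bar{x}^{(t)} \in \{0,1\}^{n-1}$ collects the remaining bits.

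The heart of the argument is a clean factorization of each HMM factor. Using the explicit per-site transition law (Eq.~\ref{Xtransition}) together with the decomposition $H(x,y) = H(\bar{x},\bar{y}) + \mathbf{1}\{b^x_e \neq b^y_e\}$, the transition kernel factors as a product of the $(n{-}1)$-bit kernel on the $\bar{x}$ coordinates and a stand-alone one-bit kernel $C_t(b,b') = \theta_t^{\mathbf{1}\{b\neq b'\}}(1-\theta_t)^{\mathbf{1}\{b = b'\}}$ on the $b$ coordinate; this is the concrete content of superfluity property (1), i.e.\ flip-invariance of the transitions. Property (2) is what I would use on the emissions: because flipping the $e$-bit leaves every emission fixed, $Pr[G_t\mid X_t=x^{(t)}]$ depends only on $\bar{x}^{(t)}$, not on $b_t$. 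Finally $\pi(x^{(1)}) = 2^{-(n-1)}\cdot 2^{-1}$ splits the same way. Substituting these three factorizations, the $\bar{x}$-sum and the $b$-sum decouple, and the $\bar{x}$-sum is exactly the likelihood $Pr[G\mid P',\theta]$ of the pedigree $P'$ obtained by removing $e$ (an HMM on the $n-1$ remaining bits). Equivalently, this is the statement that pairing each $x$ with $\sigma\oplus x$ is a valid lumping whose quotient chain coincides with $P'$.

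It then remains to observe that the $b$-factor equals one: it is the total probability over all length-$T$ trajectories of an autonomous one-bit Markov chain with initial law $(\frac{1}{2},\frac{1}{2})$ and transition $C_t$, and summing a stochastic chain over all paths telescopes from the last site back to the first and yields $1$. Hence $Pr[G\mid P,\theta] = Pr[G\mid P',\theta]$ for all $G$ and all $\theta$. Since property (2) makes the emission independent of the frozen bit value, the equality holds no matter which constant the $e$-bit is pinned to, which is exactly the claimed invariance of the likelihood under removal.

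The step I expect to be the main obstacle is pinning down what ``removed'' must mean so that the reduced object is a genuine pedigree HMM on $n-1$ edges whose transition uses exponent $n-1$, rather than the naive restriction of the $n$-edge kernel (which is substochastic once the $e$-bit is frozen). The factorization makes this precise: the one-bit kernel carries exactly the $(1-\theta_t)$/$\theta_t$ mass that would otherwise be missing, so the telescoping to $1$ is what certifies that the dimension drop $n\to n-1$ is the correct renormalization. The only other point needing care is that the decoupling of the two sums genuinely requires property (2); without emission independence the indices $b_t$ would stay entangled with $\bar{x}^{(t)}$ through the emission factors and the product would fail to separate.
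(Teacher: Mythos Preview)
Your proof is correct and takes a different, more elementary route than the paper's. The paper argues at the level of the lumped HMM on the maximum ensemble partition $m(P)$: it builds a proper isomorphism $\psi$ from the ensemble states of $P$ to those of the edge-removed pedigree, verifying emission and transition equality block by block (splitting each $W_j$ into its $x_e=0$ and $x_e=1$ halves, using the $\sigma$-symmetry to pair terms, and then invoking normalization of the transition rows to absorb the stray $(1-\theta_t)$ factor). You instead stay with the raw inheritance-vector chain and factor the full path sum along the coordinate split $x=(\bar x,b)$; the identity $H(x,y)=H(\bar x,\bar y)+\mathbf 1\{b\ne b'\}$ gives the tensor-product structure of the kernel immediately, property (2) detaches the emissions from $b$, and the autonomous one-bit marginal telescopes to $1$. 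Your argument is self-contained---it never touches the ensemble-partition machinery and makes the $n\to n-1$ renormalization transparent---whereas the paper's argument is phrased so as to plug directly into the proper-isomorphism framework of Theorem~\ref{thm:equivalence}, which is what the downstream Theorem~\ref{thm:cardinalityofedges} consumes. One small remark: the kernel factorization you attribute to property (1) in fact holds for \emph{every} edge, since Eq.~(\ref{Xtransition}) depends only on Hamming distance and $H(\sigma\oplus x,\sigma\oplus y)=H(x,y)$; it is property (2) that does all the real work in both proofs.
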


\begin{theorem}
\label{thm:cardinalityofedges}
If pedigrees $P$ and $Q$ have a different number of \emph{necessary} edges, then there is no proper isomorphism and the likelihoods for $P$ and $Q$ are identifiable. (proven in Appendix)
\end{theorem}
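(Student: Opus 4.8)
The plan is to prove the contrapositive and then quote Theorem~\ref{thm:equivalence}. Since that theorem says a proper isomorphism exists if and only if the pedigrees are non-identifiable, the ``no proper isomorphism'' conclusion already yields identifiability; so it suffices to show that \emph{if} a proper isomorphism $\psi:m(P)\to m(Q)$ exists, \emph{then} $P$ and $Q$ have the same number of necessary edges. I would therefore assume such a $\psi$, satisfying the transition and emission equalities, and derive equality of the necessary-edge counts, contradicting the hypothesis.

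First I would reduce to minimal pedigrees. Using the preceding lemma on superfluous edges, delete every superfluous edge of $P$ to obtain a pedigree $P^{\ast}$ all of whose edges are necessary and whose likelihood equals that of $P$; do the same for $Q$, obtaining $Q^{\ast}$. By uniqueness of the maximum ensemble partition (the same fact invoked in the proof of Theorem~\ref{thm:equivalence}), the optimal HMMs satisfy $\mathcal{M}(P)\cong\mathcal{M}(P^{\ast})$ and $\mathcal{M}(Q)\cong\mathcal{M}(Q^{\ast})$, and $\psi$ transports to an isomorphism $\mathcal{M}(P^{\ast})\cong\mathcal{M}(Q^{\ast})$ preserving all transition and emission probabilities. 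Writing $d_P:=|E(P^{\ast})|$ and $d_Q:=|E(Q^{\ast})|$ for the numbers of necessary edges, the task becomes to show $d_P=d_Q$, i.e.\ that the \emph{abstract} optimal HMM remembers the dimension of its underlying hypercube $\mathcal{H}_{P^{\ast}}$.

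The concrete lever I would use is the explicit pedigree form of the transition probabilities evaluated at $\theta_t=\tfrac12$. For a representative $x\in W_i$ we have $Pr[Y^{P^{\ast}}_{t+1}=W_j\mid Y^{P^{\ast}}_t=W_i]=\sum_{y\in W_j}\theta_t^{H(x,y)}(1-\theta_t)^{d_P-H(x,y)}$, which at $\theta_t=\tfrac12$ collapses to $|W_j|/2^{d_P}$, independent of $i$. The transition equality then forces $|W_j|/2^{d_P}=|W'_{\psi(W_j)}|/2^{d_Q}$ for every block, hence $|W_j|=2^{\,d_P-d_Q}\,|W'_{\psi(W_j)}|$. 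If $d_P\neq d_Q$, say $d_P>d_Q$, this makes $2^{\,d_P-d_Q}\ge 2$ divide the size of \emph{every} block of $m(P^{\ast})$, so all blocks of the minimal pedigree $P^{\ast}$ have even cardinality. It remains to contradict minimality: I would show that a uniformly even block structure, together with the full (all-$\theta_t$) transition equality, forces some single-bit flip $x\mapsto\sigma\oplus x$ to be a symmetry of $m(P^{\ast})$ whose associated edge is superfluous.

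The main obstacle is exactly this last implication, because equalities between Bernstein-type sums of different nominal degree are genuinely possible (degree elevation rewrites a degree-$d_Q$ combination as a degree-$d_P$ one), which is precisely why superfluous edges let non-identifiable pedigrees differ in total edge count. The content of the theorem is that minimality removes this slack. I expect to discharge it by a parity argument on the leading coefficient: the coefficient of $\theta_t^{d_P}$ in the $(W_i,W_j)$ entry is $\pm\sum_{y\in W_j}(-1)^{|y|}$, and this vanishes for every block exactly in the presence of a block-preserving single-bit flip; showing that such a flip (together with the emission invariance supplied by the emission equality) meets both clauses of the definition of a superfluous edge then contradicts the assumed minimality of $P^{\ast}$, giving $d_P=d_Q$. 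Verifying that the ``all blocks even'' condition cleanly yields this flip, and that the flip is genuinely superfluous rather than merely a combinatorial symmetry, is the delicate step I would spend the most care on.
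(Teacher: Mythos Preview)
Your overall strategy matches the paper's: argue the contrapositive, strip both pedigrees down to ones containing only necessary edges via the superfluous-edge lemma (the paper does this one edge at a time, checking that the proper isomorphism survives each removal using the explicit calculation in that lemma), and then read the common edge count off the transition polynomials. Where you diverge is in the last step. The paper does not pass through $\theta_t=\tfrac12$ or a parity argument; it argues directly on the Bernstein-basis expansion $Pr[Y_{t+1}=W_j\mid Y_t=W_i,\theta_t]=\sum_{y\in W_j}\theta_t^{H(x,y)}(1-\theta_t)^{n-H(x,y)}$, claiming that once no superfluous edges remain the terms on the $P'$ side and the $Q'$ side must match ``like-power for like-power'' (a surviving superfluous edge being exactly what would double up terms of the same power), which forces $n_P=n_Q$.

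Your alternative route has a real gap at the step you yourself flagged. The leading-coefficient calculation does give $\sum_{y\in W_j}(-1)^{|y|}=0$ for every block, i.e.\ every block is weight-balanced, but that does \emph{not} produce a single-bit flip stabilising the partition. A clean combinatorial counterexample on $\{0,1\}^3$: take the four blocks $\{000,111\}$, $\{001,110\}$, $\{010,101\}$, $\{011,100\}$. Each block is weight-balanced, the partition is Markovian (it is the orbit partition of an isometry), yet no $\sigma_e$ fixes it; the only stabilising XOR is the all-ones flip. So ``all blocks balanced'' does not by itself manufacture a superfluous edge in the sense of the definition, and closing your argument would require an additional structural fact specific to maximum ensemble partitions arising from \emph{pedigrees}, not arbitrary hypercube partitions. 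The paper sidesteps this by extracting the edge count from the term structure of the transition polynomial rather than trying to exhibit a superfluous edge from parity.
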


In order to connect our definition of superfluous edges to the literature we will reiterate McPeek's formulation of \emph{superfluous} individuals~\cite{McPeek2002inference}.  An \emph{individual $i \in I(P)$ is superfluous} if for every pair $\{a,b\} \in \chi$ at least one of the following holds:
\vspace{-0.25cm}
\begin{enumerate}
 \item $i \notin A(a) \cup A(b)$ where $A(a)$ is the ancestors of $a$
 \item $A(i) \cap \{a,b\} = \emptyset$ and there exists some $c \in I(P) \setminus \{a,b\}$ and $d \in I(P)$ such that for every $e \in \{i\} \cup A(i)$ for every $l \ge 1$ and every directed path $q = (q_0,...,q_l)$ of length $l$ with $q_0 = e$ and $q_l \in \{a,b\}$, we have c = $q_m$ and $d = q_{m+1}$ for some $0 \le m \le l-1$.
\vspace{-0.25cm}
\end{enumerate}
This last condition states that every directed path from $i$ or an ancestor of $i$ to $\{a,b\}$ must pass through directed edge $(c,d)$.  

The reason for the definition of superfluous individuals is that it is polynomial-time checkable.  If one were to directly check the definition of superfluous edges, one would find it necessary to compute the emission partition and the maximal ensemble state space which requires exponential time.  Despite this, from the definition of superfluous edges, it is easy to see the operational consequence: edges can be removed from the pedigree.  Superfluous edges and superfluous individuals are related as follows.

\begin{lemma}
An individual is \emph{superfluous} if and only if all the edges adjacent to that individuals are \emph{superfluous}. (proven in Appendix)
\end{lemma}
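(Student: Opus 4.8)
The plan is to reduce both notions to a single combinatorial statement about the founder‑allele roots of the inheritance graph $R_x$, and then argue the equivalence there. First I would observe that property (1) in the definition of a superfluous edge is automatic: by the Hamming‑distance form of the transition law (\ref{Xtransition}), $H(\sigma\oplus x,\sigma\oplus y)=|x\oplus y|_1=H(x,y)$, so flipping bit $e$ in both arguments preserves the transition probability for every edge and every pedigree. Hence the entire content of ``$e$ incident to $i$ is superfluous'' lives in property (2), i.e.\ in the emission partition $E_x(P)$: edge $e$ is superfluous exactly when flipping $x_e$ never changes the partition of the alleles of interest $\chi_0\cup\chi_1$ induced by $CC(R_x)$. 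Since $R_x$ is a forest rooted at founder alleles, two alleles of interest share a block of this partition precisely when they reach a common founder‑allele root, and flipping $x_e$ for $e=(p_j(i),i)$ reroutes the up‑path of $i_j$ (and of everything whose up‑path runs through $i_j$) from one parental allele of $p_j(i)$ to the other. The lemma thus asserts that this local rerouting, for every edge touching $i$, is invisible to the founder‑root partition on $\chi_0\cup\chi_1$ if and only if $i$ is superfluous in McPeek's sense.

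For the forward direction I would fix an edge $e$ incident to $i$ and verify property (2) by checking, for each ordered pair of alleles of interest, that flipping $x_e$ does not change whether they share a root. Here I apply McPeek's hypothesis to \emph{every} pair $\{a,b\}\in\chi$. If $i\notin A(a)\cup A(b)$, then no descent path from $i_0$ or $i_1$ reaches an allele of $a$ or $b$, so the flip only moves $i$'s interest‑free subtree between founder lineages and leaves the mutual connectivity of $a_0,a_1,b_0,b_1$ untouched. Otherwise the bottleneck edge $(c,d)$ exists: every descent path from $\{i\}\cup A(i)$ to $\{a,b\}$ traverses $(c,d)$, and since $A(i)\cap\{a,b\}=\emptyset$, the $i$‑lineage can influence $a,b$ only through the single allele of $d$ selected by $x_{(c,d)}$. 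A bit incident to $i$ lies strictly above this bottleneck, so flipping it merely relabels which founder feeds into $d$; it cannot change whether $a$ and $b$ both descend from that one $d$‑allele, hence cannot alter their IBD status. Because the per‑pair condition is assumed for all pairs (including pairs $\{a,z\},\{b,z\}$ with a third interest individual $z$), the same bottleneck argument controls the cross‑lineage case and rules out the flip newly joining a moved allele to an outside interest allele. Quantifying over all pairs shows every incident edge is superfluous.

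For the converse I would argue the contrapositive and construct an explicit witness. If $i$ is not superfluous, some pair $\{a,b\}$ violates both conditions: $i$ is a genuine ancestor of $a$ or $b$, no ancestor of $i$ lies in $\{a,b\}$, and no single edge bottlenecks all descent paths from $\{i\}\cup A(i)$ to $\{a,b\}$. The failure of the bottleneck yields two descent paths $\pi_a,\pi_b$ reaching alleles of $a$ and $b$ that branch at or just below $i$ rather than being funneled through one edge. I would then set the remaining bits of an inheritance vector $x$ so that along $\pi_a,\pi_b$ the relevant alleles of $a$ and $b$ reach a common founder root exactly when a chosen incident edge $e$ (the edge at the branch point) carries one value, and distinct roots when it carries the other, while holding all other interest alleles at fixed roots. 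Then $CC(R_x)$ and $CC(R_{\sigma\oplus x})$ induce different partitions of $\chi_0\cup\chi_1$, so $e$ is necessary and not every incident edge is superfluous.

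The main obstacle I expect is the bottleneck bookkeeping in both directions: formalizing ``a flip incident to $i$ acts above the bottleneck $(c,d)$ and therefore only relabels founder lineages,'' and verifying that this relabeling disturbs \emph{no} adjacency among the alleles of interest — which forces me to invoke McPeek's condition simultaneously for all pairs so that the moved bundle cannot collide with an outside interest allele. Dually, in the converse the delicate step is extracting two genuinely independent descent paths from the failure of the bottleneck condition and wiring the remaining bits so that the toggled join is the \emph{only} change to the partition. The degenerate cases — $i$ a founder, $A(i)=\emptyset$, or $i$ an ancestor of no interest individual — all collapse to the reachability argument of the first paragraph and are comparatively routine.
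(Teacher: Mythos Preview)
Your overall strategy matches the paper's: both directions reduce the edge-superfluous condition to property~(2) alone (the emission partition, i.e.\ the connected components of $R_x$ restricted to the alleles of interest) and then analyze how flipping a bit incident to $i$ interacts with McPeek's two clauses, arguing the converse by contrapositive. Your observation that property~(1) of the superfluous-edge definition holds automatically for every edge---because $H(\sigma\oplus x,\sigma\oplus y)=H(x,y)$ in the transition formula~(\ref{Xtransition})---is correct and is a genuine simplification over the paper, which instead invokes the maximal isometry group of the compressed chain to certify that $\sigma_e$ respects the Markov property once it has been shown to preserve emissions.

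There is one logical slip in your contrapositive. The failure of McPeek's condition~(2) for a pair $\{a,b\}$ is a \emph{disjunction}: either $A(i)\cap\{a,b\}\ne\emptyset$, or no bottleneck edge $(c,d)$ exists. You state it as a conjunction (``no ancestor of $i$ lies in $\{a,b\}$, and no single edge bottlenecks\ldots''), and your two-branching-paths construction only treats the no-bottleneck sub-case. You still need to dispatch the sub-case $A(i)\cap\{a,b\}\ne\emptyset$: combined with $i\in A(a)\cup A(b)$ and acyclicity this forces, say, $a\in A(i)$ and $i\in A(b)$, so there is a directed path $a\to\cdots\to i\to\cdots\to b$, and the inheritance vector that follows this path shows that the edge on it incident to $i$ is necessary. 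This is easy, but it is not one of your listed ``degenerate cases.'' The paper handles both sub-cases simultaneously by asserting that a non-superfluous $i$ lies on some simple undirected path between two individuals of interest, and that the inheritance vector encoding that path violates property~(2) for an incident edge.
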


Theorem~\ref{thm:cardinalityofedges} tells us that when two pedigrees have a different number of necessary edges they are certainly identifiable.  While this criteria is useful if we are interested in a particular pedigree, it does not allow us to draw broad conclusions about a class of pedigrees.  Ideally, if we want to integrate over the space of pedigrees, we would want to integrate only over identifiable pedigrees for efficiency of computation.

The class of diploid Wright-Fisher (dWF) pedigrees are haploid Wright-Fisher genealogies which are two-colorable where there is a color for each gender.  These pedigrees have discrete non-overlapping generations, and all the individuals of interest are `leaves' of the genealogy.

\begin{theorem}
\label{thm:dwf}
Two non-isomorphic, dWF pedigrees $P$ and $Q$ contain only necessary edges and have individuals of interest $\chi$ labeling the `leafs' which are the individuals with no children.
Then pedigrees $P$ and $Q$ are identifiable. (proven in Appendix)
\end{theorem}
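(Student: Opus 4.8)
The plan is to argue by contradiction through Theorem~\ref{thm:equivalence}: if $P$ and $Q$ were non-identifiable there would be a proper isomorphism $\psi : m(P) \to m(Q)$ preserving the transition and emission equalities, and I will show that for dWF pedigrees such a $\psi$ is forced to come from a genuine graph isomorphism $\phi : I(P) \to I(Q)$, contradicting the hypothesis that $P$ and $Q$ are non-isomorphic. The identifiability conclusion then follows.

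First I would settle the preliminary clause that a dWF pedigree whose leaves are exactly the individuals of interest $\chi$ contains only necessary edges. Because generations are discrete, non-overlapping, and every non-leaf node has children, each edge $e$ lies on a directed founder-to-leaf path, and coalescences among leaf lineages occur at ancestors shared by several such paths. Flipping the bit $\sigma_e$ therefore reroutes which founder allele is transmitted down to at least one allele of interest, changing the connected components of the inheritance graph $R_x$ on $\chi_0 \cup \chi_1$ and hence the identity state; this makes property (2) in the definition of a superfluous edge fail. Dually, via the lemma relating superfluous individuals to superfluous edges, no internal individual can meet McPeek's conditions, since each is an ancestor of some leaf and, in a branching genealogy, its directed paths to $\chi$ are not all forced through one common edge $(c,d)$. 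Hence every edge of both pedigrees is necessary.

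Next I would extract $|E(P)| = |E(Q)|$. Because all edges are necessary, Theorem~\ref{thm:cardinalityofedges} immediately handles the case of differing edge counts, so it remains to treat $|E(P)| = |E(Q)| = n$, which is exactly when the hypercubes $\mathcal{H}_P$ and $\mathcal{H}_Q$ share a dimension and the transition kernels~(\ref{Xtransition}), depending on $\theta$ only through Hamming distances, can be matched across all $\theta$. The transition equality then forces $\psi$ to preserve, for each pair of collapsed states, the multiset of Hamming distances between representative inheritance vectors, while the emission equality forces it to preserve the IBD (identity-state) partition on the alleles of interest.

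The main work, and the step I expect to be the genuine obstacle, is to reconstruct the dWF topology from this collapsed HMM and thereby promote $\psi$ to a graph isomorphism. Here the non-overlapping-generation structure is essential, because it lets me peel the genealogy off one generation at a time: at a single site the emission partition records which leaf lineages coalesce and, through the uniform generation depth, at which layer; the transition (Hamming-distance) data then pin down how many meioses separate coalescing lineages, locating the individual edges. Assembling these layer-by-layer correspondences produces a bijection $\phi : I(P) \to I(Q)$ respecting parent-child edges, the desired contradiction. The delicate part is proving this reconstruction is \emph{forced}, i.e.\ that the emission and transition data leave no freedom to permute individuals within a generation in a manner compatible with $\psi$ yet not with any graph isomorphism; this is precisely where the discreteness of generations, the labeling of $\chi$ on the leaves, and the absence of superfluous edges must be combined.
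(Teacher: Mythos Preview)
Your plan diverges from the paper's argument in its central mechanism, and the part you flag as ``the genuine obstacle'' is exactly where the gap lies. You try to lift the proper isomorphism $\psi:m(P)\to m(Q)$ on collapsed HMM states to a full pedigree-graph isomorphism $\phi:I(P)\to I(Q)$ by reconstructing the topology generation by generation from emission and Hamming-distance data. You acknowledge that showing this reconstruction is \emph{forced} is delicate, but you do not actually carry it out; the layer-peeling sketch does not explain why, say, two different attachments of an ancestor within a generation could not yield the same multiset of Hamming distances and identity states on the collapsed partition. Without that, the contradiction is not reached.

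The paper avoids this reconstruction entirely. It works directly at the level of pedigree graphs: take any \emph{maximal partial isomorphism} $\alpha$ from a connected subgraph of $P$ containing $\chi$ to one of $Q$; non-isomorphism guarantees some edge $e\in E(P)$ is unmatched. Because $e$ is necessary and the leaves are labelled, $e$ lies on a simple path joining two leaves $a,b\in\chi$, and maximality of $\alpha$ prevents $a$ and $b$ from being joined in $Q$ by a path of the same length. The discrete-generation (Wright--Fisher) hypothesis is then used for a single, concrete purpose: it forces the number of meioses $M_{ab}$ from $a$ through the MRCA to $b$ to differ between $P$ and $Q$ (or to be undefined in one of them). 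That difference in $M_{ab}$ produces a hidden state $x$ in $P$ whose emission probability, for some data, is not matched by any state of $Q$, which already breaks emission equality and hence identifiability. No global lifting of $\psi$ is needed; a single witnessing pair $(a,b)$ suffices.

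A smaller issue: you attempt to \emph{prove} that every edge of a dWF pedigree with $\chi$ on the leaves is necessary, but your argument that flipping $\sigma_e$ always changes the identity state on $\chi_0\cup\chi_1$ is not correct as stated---rerouting which founder allele reaches a single leaf need not alter the partition of alleles of interest when that founder allele is shared with no other leaf. In the paper this clause is taken as a hypothesis, not a conclusion, and the proof simply uses it.
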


\vspace{-0.25cm}
\section{Examples}
\vspace{-0.25cm}
We will consider several examples.  The first of which is a trio of pedigrees that are non-identifiable with data from unlinked sites.  This fact is well known due to their identical kinship coefficients.  However, these three pedigrees are identifiable with data from linked sites.  The second example is an extension of the well-known non-identifiable cousin-type relationships.  In this example, we extend the relationship from two to three individuals of interest and show that the relationships remain non-identifiable.  To the best of our knowledge, this is the first example of non-identifiable pedigrees on more than two individuals of interest.

\begin{figure}[ht!]
  \begin{center}
    \includegraphics[scale=0.7]{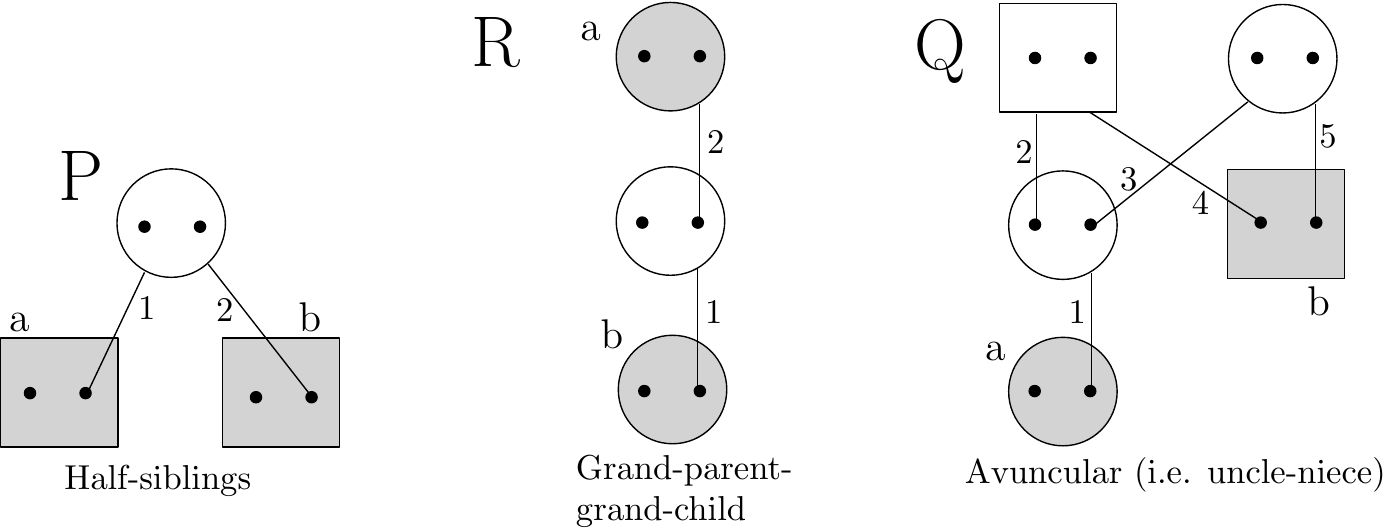}
  \end{center}
  \caption{{\bf Half-siblings, grand-parent-grand-child, and avuncular relationships are identifiable.} Individuals are drawn as boxes, if male, and circles, if female.  The individuals of interest are $\chi=\{a,b\}$.  Alleles are drawn as disks with a line between the allele and the parent it was inherited from.  For each edge, numbered $e \in \{1,...,5\}$, the binary value $x_e$ in the inheritance vector indicates which parental allele was chosen for that hidden state where zero indicates the paternal and the leftmost of the two alleles.  The numbers labeling the edges indicate in which order the bits appear in their respective vectors.  These three relationships have identical kinship coefficients.  The likelihoods of these relationships are identifiable given data on linked sites.
}
  \label{fig:example1}
\end{figure}

\paragraph{Half-Sibling, Avuncular, and Grandparent-Grandchild Relationships.}
The first example we will consider is the well-known trio of pedigrees where the kinship coefficient is identical: half-sibling, avuncular, and grandparent-grandchild relationships.  There are two individuals of interest, $a$ and $b$ for whom we have data.  These three relationships are drawn in Fig~\ref{fig:example1}.

The maximum ensemble partition for each of these three pedigrees are $\{ W_1^P =\{00,11\}, W_2^P = \{01,10\}\}$ for the half-siblings, $\{W_1^R = \{00,01\},$ $W_2^R = \{10,11\}\}$ for the grand-parent-grand-child, and for the avuncular relationship:
\vspace{-0.25cm}
\begin{eqnarray*}
W_1^Q = \{00000, 01010, 00101, 01111, 10000, 11010, 10101, 11111\} \\
W_2^Q = \{00001, 01011, 00100, 01110, 10010, 11000, 10111, 11101\} \\
W_3^Q = \{00010, 00111, 01000,  01101, 10001, 10100, 11011, 11110\} \\
W_4^Q = \{00011, 00110, 01001, 01100, 10011, 10110, 11001, 11100\}
\vspace{-0.25cm}
\end{eqnarray*}

To get the transition probabilities, we need to sum Equation~\ref{Xtransition} as in Equation~\ref{Ytranstion}.  Since for the first two pedigrees, $P$ and $R$, there are only two states, we need only compute the transition probability for one state (the others are obtained by observing that the transition probabilities sum to one).  For pedigree $P$, we have 
\vspace{-0.25cm}
\[ Pr[Y_{t+1}^P = W_1^P~|~Y_t^P = W_1^p] = (1-\theta_t)^2 + \theta_t^2 = 2\theta_t^2 - 2\theta_t +1.
\vspace{-0.25cm}
\]
For pedigree $R$, 
\vspace{-0.25cm}
\[ Pr[Y_{t+1}^R = W_1^R~|~Y_t^R = W_1^R] = (1-\theta_t)^2 + \theta_t(1-\theta_t) = 1-\theta_t.
\vspace{-0.25cm}\]

It is evident that there is no proper isomorphism that has transition equality for pedigrees $P$ and $R$.  For pairs $P,Q$ and $R,Q$ there is no proper isomorphism, because all three pedigrees contain only necessary edges and both $|m(P)| \ne |m(Q)|$ and $|m(R)| \ne |m(Q)|$.
So, we conclude that these pedigrees are identifiable as long as the number of sites $T \ge 2$ and $\theta_t < 0.5$ for all $1 \le t \le T-1$.
Despite the well-known fact that these three pedigrees have identical kinship coefficients, these pedigrees \emph{are} identifiable when the data is from multiple linked sites.  To the best of our knowledge, this paper is the first to prove this simple fact.


\begin{figure}[ht!]
  \begin{center}
    \includegraphics[scale=0.6]{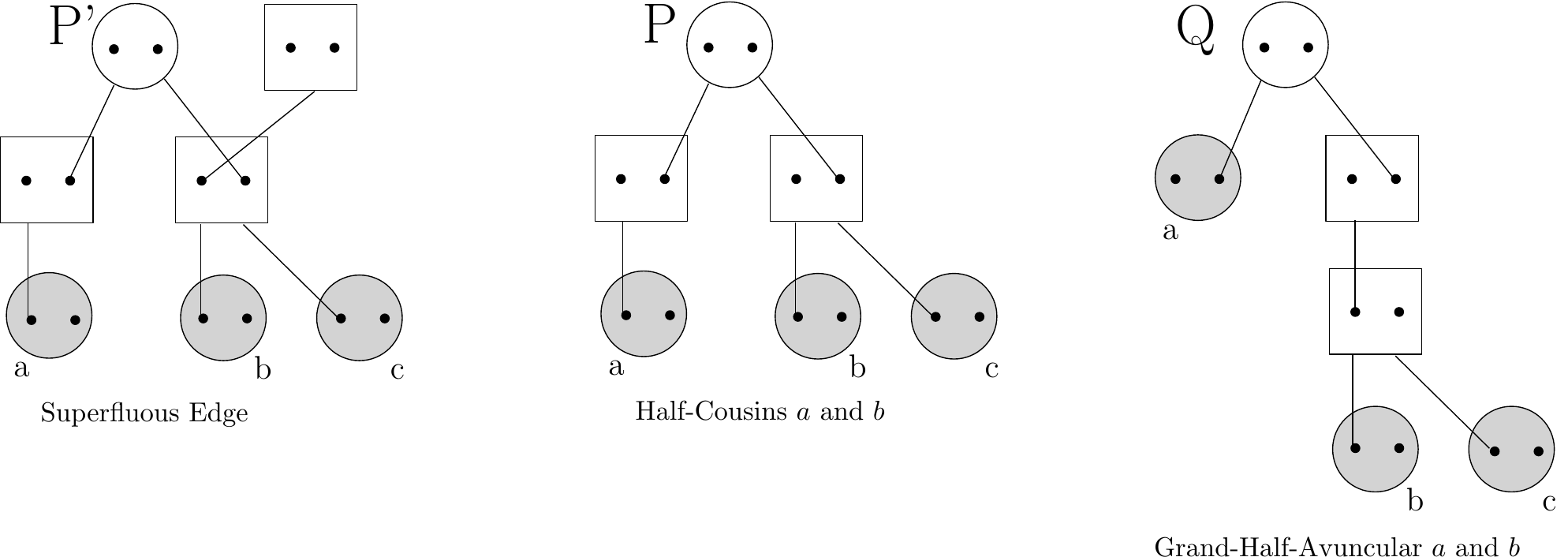}
  \end{center}
  \caption{{\bf Half-cousins and grand-half-avuncular relationships are \emph{non-}identifiable even when there is a third individual of interest.} Pedigree $P$ is derived from pedigree $P'$ by removing the superfluous edge.  The two pedigree graphs, $P$ and $Q$ are not isomorphic, yet the likelihoods are non-identifiable, meaning that no amount of data on the individual $a,b$, and $c$ will distinguish these likelihoods.}
  \label{fig:example3}
\end{figure}

\paragraph{Half-Cousins and Full-Cousins Relationships.}

To the best of our knowledge Donnelly~\cite{Donnelly1983} was the first to remark that pairs of pedigrees either of the half-cousin or of the full-cousin type and having equal numbers of edges are non-identifiable.  
Figure 6 of~\cite{Donnelly1983}
illustrates this situation.  Suppose we have two pedigrees $P_{d_a,d_b}$ and $P_{d'_a,d'_b}$ each having two individuals of interest, $\chi=\{a,b\}$ at the leaves, and the most recent common ancestors of $\chi$ have the same relationship type in both pedigrees, either half or full relationships.
Let $d_a$ and $d_b$ be the number of edges or meioses that separate individuals $a$ and $b$ from their common ancestor(s) in pedigree $P_{d_a,d_b}$.   Then as long as $d_a + d_b = d'_a + d'_b$, the two pedigrees are non-identifiable.  

Donnelly remarked this means that no amount of autosomal genetic information can distinguish these two pedigrees, ``unless of course information is available on a third person related to both of the individuals in question.''  Figure~\ref{fig:example3} shows that for some third individuals these relationships remain non-identifiable.  To the best of our knowledge, this is the first example of a pair of non-identifiable pedigrees each having three individuals of interest.

By Theorem~\ref{thm:equivalence} and Corollary~\ref{cor:kinship} we can show that \emph{both} the pedigree likelihood and the kinship coefficient are non-identifiable for half-cousin-type relationships, see Figure~\ref{fig:example3}.  The isomorphism is omitted for space reasons.  We believe that a similar result can be obtained for the full-cousin-type relationship.  However, the number of edges is large enough that calculation is difficult due to the exponential algorithm.

These examples mean that the likelihood alone is not a practical tool for testing relationships, for inferring pedigrees, or for correcting pedigrees that have relationship errors since the pedigrees under consideration might be non-identifiable.

\vspace{-0.25cm}
\section{A Potential Solution}
\vspace{-0.25cm}
This paper has focused on the likelihood  $Pr[G|P,\theta]$,
since it is currently the object being used for relationship testing and pedigree reconstruction.  However, a common alternative to the likelihood is the posterior distribution obtained via Bayes rule
\vspace{-0.25cm}
\[
Pr[P|G, \theta] = \frac{Pr[G|P,\theta]Pr[P]}{Pr[G|\theta]} = \frac{Pr[G|P,\theta]Pr[P]}{\sum_Q Pr[G|Q,\theta]Pr[Q]}.
\vspace{-0.25cm}
\]
The utility of this expression is that the posterior $Pr[P|G, \theta]$  will distinguish between non-identifiable pedigrees provided that the prior has the property that $Pr[P] \ne Pr[Q]$ when $P$ and $Q$ are non-identifiable.  Indeed, the uniform distribution over dWF pedigrees is such a prior.  Taking care with the zero-probability pedigrees which do not occur under the dWF model, we suggest a refinement.  Let $W$ be the set of all dWF pedigrees, and let $\bar{W}$ be the pedigrees which are not dWF.  Then, let $Pr[P] = 1/(|W|+1)$ for $P\in W$, and for an arbitrary ordering $Q_1,...,Q_{|\bar{W}|}$ with $Q_i \in \bar{W}$, let $Pr[Q_i] = (1/2^i)/ (Z(|W|+1))$  where $Z = \sum_{i=1} 1/2^i$.  Since the number of non-diploid WF pedigrees are countably infinite, we can approximate $Z$ using its limit $Z=1$.

Now that we have a prior, the challenge of using the posterior is that the partition function, the denominator $Pr[G|\theta]$, is most certainly intractable.  This is because there are an exponential number of pedigrees and the likelihood algorithm has exponential run-time for each pedigree.  

The intractability of the partition function points to the use of sampling methods, in particular, the Metropolis-Hastings Markov Chain Monte Carlo approach might be well suited to this problem.
Indeed, MCMC facilitates computing the proposed prior, because we can simply take the $Q_i$ in the order that they are encountered by the Markov chain.
  If we obtain a sample pedigree $P^\tau$, we can draw a new pedigree $P^{\tau+1}$ by proposing a pedigree $Q$ according to a proposal distribution $q[Q|P^\tau]$ and then choosing to accept, $P^{\tau+1} = Q$ with probability
\vspace{-0.25cm}
\[
  \min \left\{1,  \frac{Pr[G|Q,\theta]Pr[Q]}{Pr[G|P^\tau,\theta]Pr[P^\tau]} \frac{q[P^\tau|Q]}{q[Q|P^\tau]}\right\}
\vspace{-0.25cm}\]
otherwise $P^{\tau+1} = P^\tau$ remains unchanged.  A sequence of $P^1, P^2,...,P^\tau$ is guaranteed to converge to the stationary distribution $Pr[P^\tau|G,\theta]$.  After convergence at time-step $\tau$, take $\delta$ pedigree samples 
$\{P^{\tau},P^{k+\tau},...,P^{\delta k+\tau}\}$ where $k$ is the number of steps between samples.  Those samples can yield information about the posterior distribution, such as the confidence for each edge.  One could also take the most probable pedigree that was sampled, and treat that as the estimated pedigree.

The complexity here comes down to three issues, first the likelihood calculation which is exponential, second the prior on the pedigrees which might be tailored to a specific set of pedigrees having positive probabilities, i.e. those containing particular ``known'' edges, and third calculating the proposal distribution which should be tractable and produce non-zero pedigrees.  The latter is critical, because MCMC methods will not converge if they repeatedly propose zero-probability events.  This can probably be overcome by using moves inspired by the phylogenetic prune and re-graft method.  As yet, all these details are an open problem.

Alternative to integration over the whole space of pedigrees, if we have a single pedigree of which we are fairly confident, we could use this method to integrate over `nearby' pedigree graphs to get a measure of our confidence in our chosen pedigree.  We could use Theorem~\ref{thm:cardinalityofedges} as a guide to integrate only over a set of pedigrees all having the same number of necessary edges while giving a zero prior to all other pedigrees.  Such an approach might even be computationally feasible due to the polynomial-time checkable definition of necessary edges.  This would allow us to incorporate into our calculations the uncertainty we have about our chosen pedigree relative to its non-identifiable `neighbors'.

\vspace{-0.25cm}
\section{Discussion}
\vspace{-0.25cm}
This paper reviews the pedigrees that were known to be non-identifiable, namely the half-cousin-type and full-cousin-type relationships.  It also introduces a troubling new pair of non-identifiable pedigrees that are also half-cousin-type pedigrees but which contain three individuals of interest.  This is the first discussion of non-identifiable pedigrees with genetic data available for more than two individuals, demonstrating that identifiability is not restricted to pedigrees having two individuals with data.

We introduce a general criteria that can be used to detect non-identifiable pedigrees.  We show how non-identifiable likelihoods relate to non-identifiable kinship coefficients.  An example is given showing that the kinship coefficient can be identical while the likelihood is sufficient to distinguish the pedigrees.  Finally, we show that a broad class of pedigree pairs, namely those with different numbers of necessary edges, are identifiable, and the necessary edges can be obtained in polynomial time.  We also introduce a class of pedigrees, i.e.~diploid Wright-Fisher genealogies, which are provably identifiable.  

In order to effectively deal with non-identifiable pedigrees, we can use Bayes rule to obtain the posterior as a function of the likelihood and the prior.  Some mild conditions on the prior mean that the posterior will distinguish among the potential pedigrees.  The class of dWF pedigrees provides such a prior.  Furthermore, we could use Theorem~\ref{thm:cardinalityofedges} as a guide to integrate over the uncertainty we have about a pedigree structure.


\bibliography{pedigree}
\bibliographystyle{plain}

\newpage

\setcounter{theorem}{1}
\setcounter{lemma}{0}
\setcounter{corollary}{0}

\section*{Appendix}

\begin{corollary}
For unlinked sites $\theta_t = 0.5$ for all $1 \le t \le T-1$, for any pedigree graphs $P$ and $Q$ with maximum ensemble states $|m(P)| = |m(Q)|$ and identical identity states, the pedigrees are non-identifiable.
\end{corollary}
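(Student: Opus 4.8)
The plan is to reduce the statement to Theorem~\ref{thm:equivalence}, so it suffices to exhibit a proper isomorphism $\psi: m(P) \to m(Q)$ satisfying both the transition and emission equalities when every $\theta_t = 0.5$. The engine of the argument is that unlinked sites collapse the transition structure. Substituting $\theta_t = 0.5$ into Equation~\ref{Xtransition} gives $Pr[X_{t+1} = y \mid X_t = x] = (1/2)^{H(x,y)}(1/2)^{n-H(x,y)} = 1/2^{n}$ for every ordered pair $(x,y)$, so each per-site transition is uniform and independent of the current state. Summing over a block as in Equation~\ref{Ytranstion}, the induced ensemble transition for $P$ is $Pr[Y_{t+1} = W_j \mid Y_t = W_i] = |W_j|/2^{n_P}$ with $n_P = |E(P)|$, depending only on the target block's size, and likewise for $Q$ with $n_Q = |E(Q)|$.

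With transitions trivialized, I would build $\psi$ from the two hypotheses. Because emission probabilities are constant on identity states, the emission equality reduces to pairing each block $W \in m(P)$ with a block of $m(Q)$ inducing the same IBD partition on the alleles of interest; the hypothesis of identical identity states supplies such a partner, and $|m(P)| = |m(Q)|$ lets the pairing be completed to a bijection. Under the uniform transitions above, the transition equality $|W_j|/2^{n_P} = |\psi(W_j)|/2^{n_Q}$ becomes a statement about matching block probabilities (fractions of the hypercube), which I would verify is respected by the identity-state-preserving matching. Equivalently, and more transparently, one can note that at $\theta_t = 0.5$ the likelihood factorizes across sites, $Pr[G \mid P, \theta] = \prod_t Pr[G_t \mid P]$ with $Pr[G_t \mid P] = \sum_{D} (|D|/2^{n_P})\, Pr[G_t \mid D]$ summed over identity states $D$ and with $Pr[G_t \mid D]$ a function of the IBD pattern alone; identical induced identity-state distributions then force $Pr[G_t \mid P] = Pr[G_t \mid Q]$ site by site, and the product equality is immediate.

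The step I expect to be the main obstacle is the bookkeeping that turns ``identical identity states'' together with $|m(P)| = |m(Q)|$ into a bijection of the ensemble blocks that simultaneously preserves both block probability and identity state. The difficulty is that the maximum ensemble partition may subdivide a single identity state into several blocks, so one must show these subdivisions correspond across $P$ and $Q$ with matching probabilities, not merely in count. The leverage I would use is the uniqueness and maximality of $m(P)$ and $m(Q)$, together with the equal-cardinality hypothesis, to force the block probabilities to align identity-state by identity-state. Making this alignment rigorous---rather than merely asserting the per-site factorization, which already yields non-identifiability once the induced identity-state distributions coincide---is where the real care lies.
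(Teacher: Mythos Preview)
Your approach is essentially the paper's: reduce to Theorem~\ref{thm:equivalence} by exhibiting an identity-state-preserving bijection $\psi$ and observing that at $\theta_t=0.5$ the per-step transition on $\mathcal{H}_P$ is uniform, so the ensemble transitions depend only on target-block probabilities. The paper's proof is even terser---it asserts in one line that transition equality holds for any such $\psi$ and that emission equality follows because identity states refine the emission partition---so the block-probability alignment you flag as the delicate step is not worked out there either; your factorization remark and explicit computation already go beyond what the paper supplies.
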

\begin{proof}
For $\theta_t = 0.5$ for all $1 \le t \le T-1$, for any pedigree graphs $P$ and $Q$ with maximum ensemble state $|m(P)| = |m(Q)|$ and identical identity states, the transition equality is satisfied for any $\phi$ that preserves the identity states.  If the identity states are identical, then emission equality is satisfied.  This is because the identity states are a sub-partition of the emission partition, and because the emission probabilities of the identity states must be identical.  Together this means that pedigree graphs $P$ and $Q$ are non-identifiable for unlinked sites $\theta_t = 0.5$. \qed
\end{proof}

 The \emph{kinship
  coefficient} for a pair of individuals of interest is defined as the
probability of IBD when randomly choosing one allele from each
individual of interest.  Let the two individuals of interest be $\chi
= \{a,b\}$.  We write the kinship coefficient for $\chi$ as
$\Phi_{I}(P)_{\chi} = \sum_{x} \frac{\eta(x,\chi)}{4} \frac{1}{2^n}$
where $\eta(x,\chi)$ is the number of pairs of alleles of interest
$\chi_0 \cup \chi_1$ sharing the same connected component in $R_x$ and
$\chi_0 \cup \chi_1 = \{\{a_0,b_0\}, \{a_0,b_1\}, \{a_1,b_0\},
\{a_1,b_1\}\}$.  

The kinship coefficient can be rewritten as an
expectation over the condensed identity states which for general $I$
is the emission partition.  Therefore
\vspace{-0.25cm}
\[\Phi_{I}(P)_{\chi} = \sum_{E_x(P)} \frac{\eta(x,\chi)}{4} \frac{|E_x|}{2^n}.
\vspace{-0.25cm}
\]

\begin{corollary}
For unlinked sites $\theta_t = 0.5$ for all $1 \le t \le T-1$, given two 
non-identifiable pedigree graphs, $P$ and $Q$, with two
individuals of interest $\chi=\{a,b\}$, the kinship coefficient is
identical.
\end{corollary}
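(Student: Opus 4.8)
The plan is to convert the non-identifiability hypothesis into a proper isomorphism via Theorem~\ref{thm:equivalence}, and then show that this isomorphism transports the kinship coefficient from $P$ to $Q$ term by term in the rewritten sum over the emission partition. First I would invoke Theorem~\ref{thm:equivalence}: since $P$ and $Q$ are non-identifiable there is a proper isomorphism $\psi : m(P) \to m(Q)$ satisfying the transition and emission equalities. I would then record the structural fact that each maximum-ensemble block lies inside a single emission class (the ensemble partition is a sub-partition of the emission partition), so every emission class $E_x(P)$ is a disjoint union of ensemble blocks on which the emission distribution is constant.

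Next I would argue that $\psi$ induces a bijection $\bar{\psi}$ between the emission classes of $P$ and those of $Q$ that preserves $\eta$. Emission equality gives $Pr[G_t~|~Y_t^P = W, P] = Pr[G_t~|~\psi(W), Q]$ for all $g_t$, so $\psi$ sends each ensemble block to one with an identical emission distribution, and hence carries all the blocks of one emission class of $P$ (which share a distribution $\mu$) onto the blocks of a single emission class of $Q$ with the same $\mu$. The crux here is that, for $\chi = \{a,b\}$, the emission distribution over genotypes determines exactly the identity-by-descent partition of the four alleles of interest $\{a_0,a_1,b_0,b_1\}$: the support of $Pr[G_t~|~\cdot]$ is precisely the set of genotypes consistent with the equalities forced by that partition, so distinct restricted partitions yield distinct emission distributions. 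Since $\eta(x,\chi)$ merely counts which of the cross-pairs $\{a_0,b_0\},\{a_0,b_1\},\{a_1,b_0\},\{a_1,b_1\}$ are in a common component, it is a function of this partition, hence constant on each emission class and preserved by $\bar{\psi}$, giving $\eta(E) = \eta(\bar{\psi}(E))$.

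Then I would use the unlinked transition structure to match the probability masses. With $\theta_t = 0.5$, Equation~\ref{Xtransition} gives $Pr[X_{t+1}=y~|~X_t=x] = 2^{-n}$ for all $x,y$, so by Equation~\ref{Ytranstion} the ensemble transition into a block $W$ equals $|W|/2^n$ independently of the source block; this is exactly the uniform stationary mass of $W$. Transition equality then forces $|W|/2^{n_P} = |\psi(W)|/2^{n_Q}$ for every block $W$, so $\psi$ preserves the uniform mass of each block, and summing over the blocks comprising an emission class yields $|E|/2^{n_P} = |\bar{\psi}(E)|/2^{n_Q}$. Note this normalization sidesteps any need to assume $n_P = n_Q$. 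Combining these two facts in the rewritten kinship coefficient $\Phi_{I}(P)_{\chi} = \sum_{E} \frac{\eta(E)}{4}\frac{|E|}{2^{n_P}}$ and reindexing the sum by $E' = \bar{\psi}(E)$ gives $\Phi_{I}(P)_{\chi} = \Phi_{I}(Q)_{\chi}$.

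I expect the main obstacle to be the step in the second paragraph, namely the claim that the emission distribution pins down the restricted identity-by-descent partition, and therefore $\eta$. I would handle this by showing that for any two distinct partitions of $\{a_0,a_1,b_0,b_1\}$ there is a genotype $g_t$ lying in the support of one emission distribution but not the other, so that equal emission distributions can arise only from equal restricted partitions. Everything else is then a bookkeeping matching of $\eta$-values and masses through $\bar{\psi}$ in the given summation formula.
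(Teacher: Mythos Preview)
Your approach is correct and essentially the same as the paper's: obtain the proper isomorphism $\psi$ from Theorem~\ref{thm:equivalence}, push it down to an induced bijection $\gamma$ (your $\bar\psi$) on the emission partitions, and argue that $\eta$ is preserved because the emission distribution determines the restricted IBD partition of $\{a_0,a_1,b_0,b_1\}$. The one place you are more explicit than the paper is the mass-matching step: the paper asserts that ``the coefficients in the kinship sums are equivalent'' without separately justifying $|E|/2^{n_P}=|\bar\psi(E)|/2^{n_Q}$, whereas you derive this cleanly by specializing transition equality to $\theta_t=0.5$; this is a genuine (if small) improvement in rigor, not a different route.
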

\begin{proof}
Since $P,Q$ are non-identifiable, there is a proper isomorphism between $\psi:m(P) \to m(Q)$ which we can use to obtain an isomorphism $\gamma$ between the emission partitions of $P$ and $Q$ such that the coefficients in the kinship sums are equivalent.  The existence of $\gamma$ means that the kinship coefficients are identical.

To obtain $\gamma$, we simply take the isomorphism induced on $\{E_x(P)|~\forall x\} \to \{E_x(Q)|~\forall x\}$ by $\psi$.  Since the emission partition preserves the emission probabilities for all input data, and since these probabilities are a function of the connected components of $R_x$, the emission partition preserves $\eta(x,I)$, meaning that for all  $y\in E_x,~\forall x$, $\eta(y,I) = \eta(\gamma(y), I)$.  Since $\gamma$ preserves the emission partition, it also preserves the $\eta$ coefficients.  Therefore, this $\gamma$ proves that the kinship coefficients of $P$ and $Q$ are identical. \qed
\end{proof}

\begin{lemma}
We say that an edge is \emph{removed} if its bit is set to a fixed value in all the inheritance vectors.
Any superfluous edge can be removed without changing the value of the likelihood.
\end{lemma}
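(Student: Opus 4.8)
The plan is to expand the HMM likelihood as an explicit sum over hidden-state trajectories and then to exhibit it as a product of two independent sums: one involving only the bit on the superfluous edge $e$, and one involving all the remaining bits. The first factor will collapse to $1$, leaving exactly the likelihood of the pedigree with edge $e$ removed.

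First I would split each inheritance vector $x_t \in \mathcal{H}_P$ as $x_t = (a_t,b_t)$, where $a_t = (x_t)_e \in \{0,1\}$ is the coordinate on $e$ and $b_t \in \{0,1\}^{n-1}$ collects the remaining coordinates. Writing the uniform initial weight $1/2^n$ on each starting vector (the same weight appearing in the kinship sum), the likelihood is
\[
L = \sum_{x_1,\ldots,x_T \in \mathcal{H}_P} \frac{1}{2^n}\prod_{t=1}^{T} Pr[G_t~|~X_t=x_t]\prod_{t=1}^{T-1} Pr[X_{t+1}=x_{t+1}~|~X_t=x_t].
\]

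Next I would invoke the two defining properties of a superfluous edge. Since Hamming distance is additive across coordinates, $H(x_t,x_{t+1}) = [a_t \ne a_{t+1}] + H(b_t,b_{t+1})$, Equation~\ref{Xtransition} factors each transition into a bit-$e$ factor $P^e(a_{t+1}\,|\,a_t) = \theta_t^{[a_t \ne a_{t+1}]}(1-\theta_t)^{[a_t = a_{t+1}]}$ and a remainder factor $P^{-e}(b_{t+1}\,|\,b_t)$; this factorization automatically implies Property 1, so for the pedigree HMM the operative hypothesis is really Property 2. Property 2 states that the emission is invariant under flipping bit $e$, hence $Pr[G_t~|~X_t=x_t]$ depends only on $b_t$, and I would write it $E_t(b_t)$. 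Substituting these and the split $1/2^n = (1/2)(1/2^{n-1})$ into $L$ and regrouping the nested sums, the trajectory of the bit on $e$ separates completely:
\[
L = \left(\sum_{a_1,\ldots,a_T}\frac{1}{2}\prod_{t=1}^{T-1}P^e(a_{t+1}\,|\,a_t)\right)\left(\sum_{b_1,\ldots,b_T}\frac{1}{2^{n-1}}\prod_{t=1}^{T}E_t(b_t)\prod_{t=1}^{T-1}P^{-e}(b_{t+1}\,|\,b_t)\right).
\]

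Finally I would note that the first factor is the total probability of all trajectories of the two-state Markov chain on $a_t$ from its uniform start, which is $1$; the second factor is by definition the likelihood of the HMM with bit $e$ fixed, i.e.\ of the pedigree with $e$ removed. Hence $L$ is unchanged, proving the lemma. The main obstacle is the bookkeeping in the regrouping step: one must verify that transitions and emissions genuinely decouple so the single nested sum splits as a product. Once the coordinatewise factorization of Equation~\ref{Xtransition} and the bit-$e$ independence of emissions are in hand, this is forced, and the only point needing care is that the bit-$e$ marginal carries no emission weight, which is exactly why its trajectory sum telescopes to $1$.
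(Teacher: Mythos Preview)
Your argument is correct and is a genuinely different route from the paper's. The paper proves the lemma by constructing a \emph{proper isomorphism} $\psi$ between the maximum ensemble partitions $m(P)$ and $m(Q)$ (where $Q$ is $P$ with $e$ removed), verifying the emission equality from Property~2 and then doing a somewhat involved computation on the compressed transition probabilities $Pr[Y^P_{t+1}=W_j\,|\,Y^P_t=W_i]$ to obtain the transition equality. Your approach bypasses the ensemble-partition machinery entirely: you work at the level of the raw state space $\mathcal{H}_P$, exploit the coordinatewise factorization of the Hamming-distance transition, and split the trajectory sum directly. This is more elementary and, as you correctly note, makes transparent that Property~1 is automatic for \emph{every} edge in this HMM (XOR by any fixed vector is an isometry of Hamming distance), so superfluousness is really governed by Property~2 alone. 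The paper's route has the advantage of staying inside the proper-isomorphism framework that drives Theorems~\ref{thm:equivalence} and~\ref{thm:cardinalityofedges}, whereas yours has the advantage of needing nothing beyond the definition of the likelihood and a one-line telescoping of the bit-$e$ marginal chain. One small point worth making explicit in your write-up is that the ``removed'' HMM (bit fixed to $0$) has transition kernel $\theta_t^{H(b,b')}(1-\theta_t)^{(n-1)-H(b,b')}$ and uniform start $1/2^{n-1}$, i.e.\ exactly your $P^{-e}$ factor, so that the second bracket really is the removed-edge likelihood and not merely proportional to it.
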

\begin{proof}
First consider $P$ and unnecessary edge $e$ with corresponding indicator vector $\sigma_e$.  Let $Q$ be the pedigree with edge $e$ removed where the edge is removed by fixing the value of bit $e$ to zero.
We use the notation $x_e$ to refer to the $e$th bit of inheritance vector $x$.  

We will prove that the likelihoods are the same by proving that there is a proper isomorphism $\psi$ from the states of $Y^P_t$, the original pedigree HMM to the states of the removed-edge HMM $Y^Q_t$.  Furthermore, we have the property that $x \in \mathcal{H}_P$ has one more bit than $\bar{x} \in \mathcal{H}_Q$.  This means that we need to prove that $\psi$ satisfies both the transition and emission equalities.

We will first note that the emission probabilities are the same if we
remove edge $e$ as can be seen by the second property of the
superfluous edge definition.  So any $\psi$ satisfies the emission
equality if it maps $x \in \mathcal{H}_P$ to $\sigma(x)$ if $x_e = 1$
and to $x$ if $x_e =0$.  For the rest of the proof, we will consider
such a $\psi$.

Now, we need only prove that one of these $\psi$ satisfying the emission equality also satisfies the transition equality.  We will do this by a short computation on the transition probabilities.
Notice that $\mathcal{H}_P$ is the union of two sets $S_1 = \{x | x_e = 1\}$ and $S_0 = \{x | x_e = 0\}$.
We can also write that $\forall x \in S_0$, $\sigma(x) \in S_1$.  
Recall that the transition probabilities are written, 
for $x \in W_i$ and for $i \ne j$, as 
\begin{eqnarray*}
Pr[Y^P_{t+1}=W_j~|~Y^P_t = W_i] &=& \sum_{y \in W_j} Pr[X_{t+1}=y~|~X_t=x] \\
    &=& \sum_{y \in W_j \cap S_0} Pr[X_{t+1}=y~|~X_t=x] \\
            && + \sum_{y \in W_j \cap S_1} Pr[X_{t+1}=y~|~X_t=x] \\
    &=& \sum_{y \in W_j \cap S_0} Pr[X_{t+1}=y~|~X_t=x] \\
            && + \sum_{y \in W_j \cap S_1} Pr[X_{t+1}=y~|~X_t=\sigma(x)]. 
\end{eqnarray*}
We can make the last statement due to edge $e$ not influencing the emission of the HMM, by the second property of the definition of a superfluous edge.  This is because the edge $e$ must not be on any direct path connecting two individuals of interest.  Therefore, we can conclude that  $x \in W_i $ and $\sigma(x) \in W_i$.  Furthermore without loss of generality, we will assume that $x \in W_i \cap S_0$.

Continuing on, we can finish the proof by employing the first property of the definition of a superfluous edge to get that
\begin{eqnarray*}
Pr[Y^P_{t+1}=W_j~|~Y^P_t = W_i]
    &=& 2 \sum_{y \in W_j \cap S_0} Pr[X_{t+1}=y~|~X_t=x] \\
 \label{e:trans1}   &=& 2  \sum_{y \in W_j \cap S_0} \theta_t^{H(x,y)}(1-\theta_t)^{n-1-H(x,y)} (1-\theta_t) \\
 \label{e:trans2}   &=& 2  (1-\theta_t) \sum_{y \in W_j \cap S_0} \theta_t^{H(x,y)}(1-\theta_t)^{n-1-H(x,y)} \\
 \label{e:trans3}   &=& \sum_{\bar{y} \in \psi(W_j)} \theta_t^{H(\bar{x},\bar{y})}(1-\theta_t)^{n-1-H(\bar{x},\bar{y})} \\
    &=& Pr[Y^Q_{t+1}=\psi(W_j)~|~Y^Q_t = \psi(W_i)]
\end{eqnarray*}
where the second and third lines are due to the definition of the transition probability.  
We note that removing an edge by fixing its bit-value in all the inheritance vectors is nearly equivalent to removing the edge's bit entirely. 
The easiest way to see the equality of the last few lines is to note that the transition probabilities are distributions---they must sum to one---and therefore proportionality implies equality.
So, we are able to conclude that there is a $\psi$ satisfying the emission equality and the transition equality.
\qed
\end{proof}

\begin{theorem}
If pedigrees $P$ and $Q$ have a different number of \emph{necessary} edges, then there is no proper isomorphism and the likelihoods for $P$ and $Q$ are identifiable.
\end{theorem}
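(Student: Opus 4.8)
The plan is to prove the contrapositive: if $P$ and $Q$ are \emph{non}-identifiable, then they have the same number of necessary edges. By Theorem~\ref{thm:equivalence}, non-identifiability is equivalent to the existence of a proper isomorphism $\psi : m(P) \to m(Q)$, so I may assume such a $\psi$ exists and must deduce that $P$ and $Q$ agree in their count of necessary edges. First I would reduce to the minimal case. Using the superfluous-edge Lemma, I repeatedly remove superfluous edges from $P$ to obtain a pedigree $P^{*}$ whose edges are all necessary, with $|E(P^{*})|$ equal to the number of necessary edges of $P$ and with $Pr[G\mid P^{*},\theta] = Pr[G\mid P,\theta]$ for all $G,\theta$; I build $Q^{*}$ from $Q$ likewise. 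Writing $\sim$ for ``has the same likelihood function as'', which is an equivalence relation, the facts $P \sim P^{*}$, $Q \sim Q^{*}$, and $P \sim Q$ together give $P^{*} \sim Q^{*}$, so Theorem~\ref{thm:equivalence} yields a proper isomorphism $\psi^{*}: m(P^{*}) \to m(Q^{*})$. It therefore suffices to prove that two \emph{minimal} pedigrees admitting a proper isomorphism have the same number of edges.

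The engine of the argument is that transition equality forces $\psi^{*}$ to preserve the collapsed transition operator as a function of $\theta$, and hence every quantity computable from it. Evaluating the lumped transition probabilities at $\theta = 1/2$ gives $|\psi^{*}(W_j)| = 2^{\,n_Q - n_P}\,|W_j|$ for every block, where $n_P = |E(P^{*})|$ and $n_Q = |E(Q^{*})|$; this already shows block sizes are preserved up to a power of two, but (as the half-sibling example shows, where both blocks have size two yet $n = 2$) block sizes alone cannot pin down $n$. The sharper invariant I would use is spectral: the lumped chain is an aggregation of the product random walk on the hypercube $\mathcal{H}_{P^{*}}$, whose eigenfunctions are the Walsh characters with eigenvalues $(1-2\theta)^{k}$, $0 \le k \le n$. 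Only the characters that are constant on the ensemble blocks survive the lumping, so the eigenvalues of $\psi^{*}$'s transition matrix form a sub-multiset of $\{(1-2\theta)^{k}\}$, and the largest exponent $k$ that occurs is preserved by $\psi^{*}$. Equivalently, one may track the $\theta$-degree of the transition polynomials $Pr[Y_{t+1}^{P^{*}} = W_j \mid Y_t^{P^{*}} = W_i,\theta]$, whose coefficient of $\theta^{n}$ is proportional to $\sum_{y \in W_j}(-1)^{|y|_1}$.

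The crux — and the step I expect to be the main obstacle — is the lemma that for a \emph{minimal} pedigree this top exponent equals $n$, i.e.\ the top Walsh mode is never annihilated by the ensemble lumping. This is exactly what fails when an edge $e$ is superfluous: there the flip $\sigma_e$ fixes every block, the blocks are $\sigma_e$-invariant, and the degree-$n$ information collapses. What must be established is the converse: if \emph{no} single coordinate flip fixes the whole maximum ensemble partition, then some block $W_j$ has $\sum_{y \in W_j}(-1)^{|y|_1} \ne 0$, so the transition polynomials genuinely attain degree $n$. The delicate point is a parity subtlety: one can write down abstract $\theta$-lumpable partitions (for instance, pairing each $x$ with its bitwise complement) all of whose blocks are parity-balanced and yet which possess no single-coordinate symmetry, so the lemma cannot be purely combinatorial about the hypercube. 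I would close this gap by bringing in the emission structure — the ensemble blocks refine the emission partition, and for an actual pedigree the global maternal/paternal relabelings that could balance parity must already show up either as superfluous edges or as collapses of emission classes — thereby ruling out a balanced top mode on minimal pedigrees. Granting the lemma, $\psi^{*}$ preserves the top exponent, forcing $n_P = n_Q$, so the numbers of necessary edges of $P$ and $Q$ coincide; by contraposition, different counts of necessary edges preclude any proper isomorphism and hence make $P$ and $Q$ identifiable.
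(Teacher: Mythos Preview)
Your outline matches the paper's proof: argue the contrapositive, strip superfluous edges from each pedigree to reach minimal $P',Q'$ with the likelihood unchanged, then use the proper isomorphism on the transition polynomials in $\theta$ to force $|E(P')|=|E(Q')|$. The paper's proof is terse exactly where you locate the crux---it simply asserts that once no superfluous edges remain, $\psi$ yields ``a one-to-one correspondence between the terms of $P'$ and the terms of $Q'$, since there must be the same number of like-powers,'' without isolating your survival-of-the-top-Walsh-mode lemma or confronting the parity-balanced obstruction you raise---so your spectral reformulation sharpens rather than departs from the paper's argument.
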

\begin{proof}
We will prove this using the contra-positive.  Suppose that $P$ and $Q$ are not identifiable. Then there is a proper isomorphism $\psi:m(P) \to m(Q)$.  By removing unnecessary edges from $P$ and $Q$ we will show that they have the same number of necessary edges proving the statement.

Call $P'$ the pedigree with $E(P') = E(P) \setminus \{e\}$.  By the sequence of equalities above, we have that $P'$ and $Q$ have a proper isomorphism since the transition equalities were maintained and the emission equality is unchanged. We check each edge of $P$ and $Q$ removing any edges that are unnecessary to obtain $P'$ and $Q'$ that contain only necessary edges.  If a pedigree has an unnecessary edge that would be evident by comparing the polynomials in the transition probabilities and seeing that there are twice the number of terms with the same powers. The sequence of edge removals yields a sequence of proper isomorphisms which means that the projection of $\psi$ onto the unremoved edges is a proper isomorphism for $P'$ and $Q'$.  Remove edges until there are no superfluous edges in $P'$ or $Q'$.  

Once the superfluous edges have been removed, $|E(P')| = |E(Q')|$, since the map $\psi$ on the polynomials of the transition probabilities guarantee a one-to-one correspondence between the terms of $P'$ and the terms of $Q'$, since there must be the same number of like-powers.  This ensures that the number of inheritance vectors and therefore the number of edges are equal. \qed
\end{proof}

The maximum ensemble partition consists of a
group of isometries acting on the state-space $\mathcal{H}_P$.  An
isometry is any function $T$ such that $|T(x) \oplus T(y)| = |x \oplus
y|$ for all $y \in W_i$ and $x \in W_j$, for all $i$ and $j$.
This means that the transition probabilities satisfy 
\[
Pr[X_{t+1}=y |X_t = x] = Pr[X_{t+1} = T(y) | X_t = T(x)].
\]
For details see~\cite{Kirkpatrick2011xx}.

\begin{lemma}
An individual is \emph{superfluous} if and only if all the edges adjacent to that individuals are \emph{superfluous}.
\end{lemma}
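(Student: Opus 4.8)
The plan is to translate both definitions into statements about how flipping a single bit $\sigma_e$ acts on the inheritance graph $R_x$, and then to match the resulting graph conditions to McPeek's two cases. First I would make precise the effect of the flip: if $e$ records the inheritance along a parent-child edge incident to $i$, then replacing $x$ by $\sigma_e \oplus x$ in $R_x$ detaches the subtree rooted at one allele node from one of its possible parental alleles and reattaches it to the sibling allele, leaving the rest of the forest fixed. Because a bit flip preserves Hamming distance (the isometry remark of the Appendix), the first property of the superfluous-edge definition reduces to the requirement that the maximum ensemble partition be closed under $\sigma_e$; the substantive condition is the second property, which, by the definition of the emission probability as a function of $CC(R_x)$, holds for all $x$ exactly when this reattachment never changes the partition of $\chi_0 \cup \chi_1$ induced by the connected components of $R_x$. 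I would record this reduction explicitly: $e$ is superfluous iff flipping $x_e$ is \emph{IBD-irrelevant} for $\chi$ and maps each ensemble block to itself.

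For the forward direction ($\Rightarrow$) I would fix a superfluous individual $i$ and an incident edge $e$, and verify IBD-irrelevance pair by pair. McPeek's definition supplies, for each pair $\{a,b\}\in\chi$, one of two cases. In the first case $i \notin A(a)\cup A(b)$; since any child of $i$ that were an ancestor of $a$ or $b$ would force $i$ itself to be one, no allele routed through $i$ reaches $a$ or $b$, so the rerouted subtree contains no allele of interest relevant to $\{a,b\}$ and the flip cannot alter their IBD status. In the second case every directed path from $\{i\}\cup A(i)$ to $\{a,b\}$ passes through the single edge $(c,d)$; here I would argue that the flip only changes which founder allele is carried into the subtree that feeds $(c,d)$, and since this allele can reach $a$ or $b$ only through $(c,d)$, the connectivity among $a_0,a_1,b_0,b_1$ in $R_x$ is fixed by the structure downstream of $d$ and is unchanged. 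Establishing this for both cases and all pairs yields the emission property for $e$, and the same symmetry argument shows $\sigma_e$ fixes each ensemble block, giving the transition property.

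For the converse ($\Leftarrow$) I would prove the contrapositive: if $i$ is not superfluous, I exhibit an incident edge that is not superfluous. Non-superfluity means some pair $\{a,b\}$ fails both of McPeek's conditions, so $i$ is an ancestor of (say) $a$ while the paths from $i$'s neighborhood to $\{a,b\}$ are \emph{not} all funneled through a common bottleneck edge. I would use this to construct an explicit inheritance vector $x$ in which flipping the bit of a well-chosen incident edge $e$ moves exactly one allele-of-interest connection across components: set the remaining bits so that the two sibling alleles at the first branch point below $i$ lead to distinct members of $\{a_0,a_1,b_0,b_1\}$, so that the flip merges or splits precisely one IBD pair. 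This single witness violates the emission property, so $e$ is necessary, completing the contrapositive.

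The main obstacle is the bottleneck case of the forward direction: making rigorous that ``all paths pass through $(c,d)$'' forces the flip to be invisible to the induced partition. One must rule out the rerouted allele creating a new connection to $a$ or $b$ along some route avoiding $(c,d)$, and must simultaneously confirm block-closure so that the transition property holds. I expect to handle this by showing that the connected component of $R_x$ meeting $\{a_0,a_1,b_0,b_1\}$ through the $i$-subtree is unique and enters it only via $(c,d)$, so that the founder label internal to that component is irrelevant to the partition; by contrast, extracting the violating path from the negation of McPeek's conditions and building the witness vector in the converse is comparatively routine.
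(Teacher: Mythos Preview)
Your proposal is correct and follows essentially the same architecture as the paper: for the forward direction you run McPeek's two cases to verify that flipping an incident bit leaves emissions unchanged and then invoke the isometry structure to get block-closure, and for the converse you argue the contrapositive by producing a path through $i$ joining two individuals of interest and an inheritance vector witnessing an emission change. The paper's write-up is considerably terser---it dispatches both McPeek cases in one line (``since individual $i$ has no data'') and jumps straight to the isometry-group argument---whereas you spell out the inheritance-graph combinatorics, especially in the bottleneck case, which is exactly where the paper's sentence hides the real work; your identification of that case as the main obstacle is apt, and your planned treatment of it is sound.
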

\begin{proof}
($\Rightarrow$)  Take condition (1).  If $i$ is not an ancestor of any individual of interest, then it is a descendant of some ancestor of $\chi$.  Consider a parent edge $e$.  Since individual $i$ has no data, we have $Pr[G_t| X_t = x] = Pr[G_t| X_t = \sigma_e \oplus x]$. Since this holds, both $x$ and $\sigma_e \oplus x$ are in the same set of the emission partition.  The same holds for any $y$ and $\sigma_e \oplus y$.
Therefore, we can apply the isomorphism $\sigma_e$ to the transition probabilities using the Markov property, and obtain
$Pr[X_{t+1} = y | X_t = x] = Pr[X_{t+1} = \sigma \oplus y | X_t = \sigma \oplus x]$.  This is because any isometry that is an element of the maximal isomoetry group, having the orbits $\{W_1,...,W_k\}$, of the compressed Markov chain will certainly satisfy the Markov property.

Consider condition (2).  Now $i$  is an ancestor of some individuals in $\chi$, but the lineages in $\chi$ have coalesced before reaching $i$.  Again $i$ is an individual without data.  So, we can apply the same argument as for condition (1).

($\Leftarrow$)  Assume that $i$ does not satisfy either of the conditions for being superfluous.  Then it is on some simple undirected path connecting two nodes of interest $a$ and $b$ where a path is simple if no edge is repeated.  Then there exists $x$, given by the simple path, such that $Pr[G_{t}| X_t = x] \ne Pr[G_{t} | X_t = \sigma \oplus x]$ and condition (2) of the definition of a superfluous edge is violated. \qed
\end{proof}

\begin{theorem}
Two non-isomorphic, diploid Wright-Fisher pedigrees $P$ and $Q$ contain only necessary edges and have individuals of interest $\chi$ labeling the `leafs' which are the individuals with no children.
Then pedigrees $P$ and $Q$ are identifiable.
\end{theorem}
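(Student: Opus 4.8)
The plan is to reduce to the rigidity of the generational structure and invoke the two main theorems already proved. First I would split into two cases according to the edge counts. Since the hypothesis guarantees that every edge of $P$ and of $Q$ is necessary, the number of necessary edges of each pedigree equals its total number of edges. Hence if $|E(P)| \ne |E(Q)|$, the two pedigrees have a different number of necessary edges, and Theorem~\ref{thm:cardinalityofedges} immediately yields that they are identifiable. This disposes of the easy case and leaves the situation $|E(P)| = |E(Q)| = n$, in which both maximum-ensemble HMMs live on the same hypercube $\mathcal{H}_P = \mathcal{H}_Q = \{0,1\}^n$.

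For the remaining case I would argue by contradiction. Suppose $P$ and $Q$ are non-identifiable, so by Theorem~\ref{thm:equivalence} there is a proper isomorphism $\psi : m(P) \to m(Q)$ preserving the transition and emission equalities; the goal is to show that such a $\psi$ forces a graph isomorphism between $P$ and $Q$ (ignoring gender, as autosomal data permits), contradicting the assumption that $P$ and $Q$ are non-isomorphic. The leverage comes entirely from the dWF structure: I would fix the generation function assigning founders to the top generation and the individuals of interest $\chi$, which are exactly the leaves, to a common bottom generation $g$, with every edge joining consecutive generations. The decisive structural consequence is that any common ancestor of two leaves $a,b$ sitting at generation $i$ is reached by two lineages of equal length $g-i$, so the leaf--ancestor--leaf cycle carried by that common ancestor involves exactly $2(g-i)$ edges. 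This equal-length property is precisely what fails in the non-identifiable cousin examples, where $d_a \ne d_b$, and it is the heart of why the dWF class is rigid.

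Next I would reconstruct the topology from invariants that $\psi$ must preserve. The emission equalities recover, for every set of alleles of interest, whether those alleles lie in a common connected component of $R_x$, i.e.\ the identity-by-descent pattern at the leaves; the transition equalities, being polynomials in $\theta$ whose exponents are the Hamming distances $H(x,y)$, recover the recombination geometry and in particular the lengths of the coalescence cycles described above. By the equal-length property these cycle lengths are exactly $2(g-i)$, so from the HMM alone one reads off, for each pair and more generally each tuple of leaves, the generation and multiplicity of their common ancestors. Propagating this information from the leaves upward reconstructs the parent--child relationships generation by generation, and hence the pedigree graph up to isomorphism. Since $\psi$ preserves all of this data simultaneously, the reconstruction carried out on the $P$ side and on the $Q$ side must agree, producing a bijection $\phi : I(P) \to I(Q)$ that is a graph isomorphism; this contradicts $P \not\cong Q$, so no proper isomorphism exists and the likelihoods are identifiable.

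I expect the main obstacle to be exactly this reconstruction step: establishing that the coalescence-cycle-length data, read off the maximum-ensemble HMM, pins down the dWF topology uniquely up to isomorphism. This is where the non-overlapping-generation hypothesis does all the work, since it is the symmetry forcing equal lineage lengths to every common ancestor that removes the single degree of freedom responsible for the cousin-type non-identifiability.
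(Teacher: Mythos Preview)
Your route differs from the paper's. The paper does not split on edge counts or pass through Theorem~\ref{thm:equivalence}; it works directly at the level of the pedigree graphs via \emph{maximal partial isomorphisms} $\alpha:I(P)\to I(Q)$ that fix $\chi$. Because $P\not\cong Q$, any such $\alpha$ leaves some necessary edge $e\in E(P)$ unmatched; since $e$ is necessary it lies on a simple path joining two leaves $a,b\in\chi$, and the dWF (non-overlapping generations) hypothesis forces the MRCA path-length $M_{ab}(P)$ to differ from $M_{ab}(Q)$---precisely because equal lineage lengths to every common ancestor rule out the $M_{ab}(P)=M_{ab}(Q)$ coincidence that produces the cousin-type collisions. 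From this single witnessing pair the paper concludes that the emission probabilities at the corresponding hidden states differ for some data, hence the likelihoods differ. No global reconstruction of the topology is attempted.

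Your contradiction via the proper isomorphism $\psi$ and a bottom-up reconstruction shares the same key dWF insight but is more ambitious, and the step you yourself flag as the obstacle is a genuine gap. The transition equality hands you, for each ordered pair of ensemble states, a polynomial in $\theta_t$ whose coefficients count inheritance vectors in the target state at each Hamming distance from a fixed representative of the source state; going from these aggregated counts to individual coalescence-cycle lengths, and then showing that the cycle lengths together with the leaf IBD patterns pin down the full dWF topology up to isomorphism, requires real work that you have not supplied (and is not obvious, since multiple paths and shared ancestry above the MRCA can interfere). The paper avoids all of this by exhibiting one distinguishing pair $(a,b)$ rather than reconstructing the whole graph. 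If you can close the reconstruction step your argument would yield a stronger structural statement---the ensemble HMM determines the dWF pedigree---but as written the proposal stops short of a proof, whereas the paper's direct witness argument does not need that machinery.
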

\begin{proof}
Since pedigrees $P$ and $Q$ are not isomorphic, only \emph{maximal partial isomorphisms} $\alpha:I(P) \to I(Q)$ can be created.  We consider a partial isomorphism one that maps a connected subgraph of $P$ with nodes $U$ to a connected subgraph of $Q$ with nodes $V$, where the nodes $\chi \in U$ and $\chi \in V$, exist in both subgraphs.  A maximal partial isomorphism is one where no further nodes $i \in I(P)$ and $j \in I(Q)$  can be paired while maintaining the following property on the edges: $u,v \in U$ such that $(u,v) \in E(P)$ implies that $(\alpha(u), \alpha(v)) \in E(Q)$.

For each partial isomorphism $\alpha$, there exists an edge $e \in E(P)$ which is not mapped to $I(Q)$, because $P$ and $Q$ are not isomorphic.
Since $e$ is a necessary edge and $P$ and $Q$ are leaf-labeled, it lies on some simple path connecting individuals $a \in \chi$ and $b \in \chi$ where $a \ne b$.  Notice that $a$ and $b$ cannot be connected in $Q$ via a path of the same length as in $P$, otherwise $\alpha$ would not be maximal as there is a matching edge for $e$ in the path.

Let the \emph{most recent common ancestor (MRCA)} of $a$ and $b$ be the youngest individual who is an ancestor of both $a$ and $b$.  
Let $\Pi$ be a pedigree, then $M_{ab}(\Pi)$ is the number of edges on the path between $a$, the MRCA of $a$ and $b$, and $b$ in pedigree $\Pi$.  Because of the Wright-Fisher assumption, there are three cases for the path in $Q$:
\begin{enumerate}
\item $a$ and $b$ are not connect in $Q$,
\item $a$ and $b$ are connected in $Q$ via a path having MRCA $M_{ab}(Q) < M_{ab}(P)$, and
\item $a$ and $b$ are connected in $Q$ via a path having MRCA $M_{ab}(Q) > M_{ab}(P)$.
\end{enumerate}
Without the Wright-Fisher assumption, it would be possible for $M_{ab}(Q) = M_{ab}(P)$.
In all three cases the emission probability for the path in $Q$ is different for the emission probability of the path in $P$.  
This is because, for any such paths in $P$ and $Q$ as detailed above, any hidden state $x$ in the state-space which contains the path in $P$ as a subgraph of the $R_x$ will have a different emission probability for some data than a hidden state $y$ which contains the path in $Q$.  Furthermore, we know that since $\alpha$ did not produce a full isomorphism, there is no other element $y'$ in the state-space of $Q$ that has the same emission probability as $x$.
This proves that $P$ and $Q$ have distinct likelihoods and are identifiable. \qed
\end{proof}

\end{document}